\newcommand{\STAB}[1]{\begin{tabular}{@{}c@{}}#1\end{tabular}}
\DeclareMathOperator*{\argmin}{arg\,min}
\newcommand{\red}[1]{{\color{black}#1}}
\newtheorem{defi}[theorem]{Definition}
\newtheorem{obs}[theorem]{Observation}
\newtheorem{thm}[theorem]{Theorem}
\newtheorem*{ex}{Example}
\newcommand{\M}{\mathcal M}
\newcommand{\ssq}{\sigma_0}
\newcommand{\R}{\mathcal R}
\renewcommand{\S}{\mathcal S}
\def\totwecompletion{$1|\mbox{\it resch-LIFO}|\sum w_j C_j$}
\def\maxlateness{$1|\mbox{\it resch-LIFO}|L_{\max}$}
\def\numberlatejobs{$1|\mbox{\it resch-LIFO}|\sum U_j$}
\def\wnumberlatejobs{$1|\mbox{\it resch-LIFO}|\sum w_j U_j$}
\def\maxreg{$1|\mbox{\it resch-LIFO}|\Phi_{\max}$}
\def\real{\mathds{R}}
\title{Optimally rescheduling jobs with a LIFO buffer}
\author{
Gaia Nicosia \and Andrea Pacifici \and Ulrich Pferschy \and Julia Resch \and Giovanni Righini
}
\institute{G. Nicosia 
\at Dipartimento di Ingegneria, Universit\`{a} Roma Tre, Via della Vasca Navale 79, 00146 Rome, Italy\\
\email{gaia.nicosia@uniroma3.it}
\and
A. Pacifici
\at Dipartimento di Ingegneria Civile e Ingegneria Informatica, Universit\`{a} di Roma “Tor Vergata”, Via del Politecnico 1, 00133 Rome, Italy\\
\email{andrea.pacifici@uniroma2.it}
\and
U. Pferschy \and J. Resch
\at Department of Operations and Information Systems, University of Graz, Universitaetsstrasse 15, 8010 Graz, Austria\\
\email{\{ulrich.pferschy, julia.resch\}@uni-graz.at}
\and
G. Righini
\at Dipartimento di Informatica, Universit\`{a} degli Studi di Milano, Via Celoria 18, 20100 Milan, Italy\\
\email{giovanni.righini@unimi.it}
}
\LetLtxMacro\OriginalLongrightarrow\Longrightarrow
\LetLtxMacro\OriginalLongleftarrow\Longleftarrow
\DeclareRobustCommand\Longrightarrow{\NewRelbar\joinrel\Rightarrow}
\DeclareRobustCommand\Longleftarrow{\Leftarrow\joinrel\NewRelbar}
\DeclareRobustCommand\NewRelbar{%
	\mathrel{%
		\mathpalette\@NewRelbar{}%
	}%
}
\newcommand*\@NewRelbar[2]{%
	\sbox0{$#1=$}%
	\sbox2{$#1\Rightarrow\m@th$}%
	\sbox4{$#1\Leftarrow\m@th$}%
	\clipbox{0pt 0pt \dimexpr(\wd2-.6\wd0) 0pt}{\copy2}%
	\kern-.2\wd0 %
	\clipbox{\dimexpr(\wd4-.6\wd0) 0pt 0pt 0pt}{\copy4}%
}
\begin{document}

\maketitle


\begin{abstract}
This paper considers single-machine scheduling problems in which a given solution, i.e.\ an ordered set of jobs, has to be improved as much as possible by re-sequencing the jobs. The need for rescheduling may arise in different contexts, e.g.\ due to changes in the job data or because of the local objective in a stage of a supply chain \red{that is} not aligned with the given sequence.
A common production setting entails the movement of jobs (or parts) on a conveyor. 
This is reflected in our model by facilitating the re-sequencing of jobs via a buffer of limited capacity accessible by a LIFO policy.
We consider the classical objective functions of total weighted completion time, maximum lateness and (weighted) number of late jobs
and study their complexity. 
For three of these problems we present strictly polynomial-time dynamic programming algorithms,
while for the case of minimizing the weighted number of late jobs NP-hardness is proven and a pseudo-polynomial algorithm is given.

%

\keywords{Scheduling \and Rescheduling \and Sequence coordination \and Supply chain sustainability \and Dynamic programming algorithms \and Complexity}
\end{abstract}

\section{Introduction}\label{sec:intro}


Classical single-machine scheduling problems aim at finding an optimal sequence to process a set of jobs with given processing times possibly subject to additional constraints concerning, for instance, release dates, due dates, etc.

In several industrial settings, different unforeseen phenomena, such as data-obsolescence or disruptions, could deteriorate the performance (or optimality) of the planned ahead schedule. In this case, it is sometimes possible---or even necessary---to compute a new schedule by rearranging the previous job sequence.
A similar situation frequently occurs, e.g.\ in lot production or in operating-rooms scheduling. In the first case, lots must typically go through several working stages: Between two of them, it may be beneficial to reorganize the sequence, owing to, for instance, different characteristics of the lots in the next stage. In the second case, a tentative schedule for a certain planning period is built in advance and, later on, the final schedule is output trying to minimize changes with respect to the original plan.
%

    %
In this context, we are interested in the following problem: we are provided with an initial job sequence to feed a single processing resource;
we need to rearrange the jobs such that the new sequence performs well in terms of some given criterion. Depending on the considered setting, we also need to deal with a given set of feasible reconfigurations (such restrictions may be imposed, e.g.\ by the physical handling system of the plant) which are, somehow, not too distant from the original sequence. 
%
%
A special version of this problem is also considered in \citet{bib:anppAIRO2018,bib:anppCTW2018}. The authors studied a rescheduling problem 
with the constraint that the jobs extracted from the given initial sequence can be re-inserted only in later positions, i.e.\ jobs can be postponed but not moved ahead of the schedule.
This corresponds to a physical setting where jobs are transported on a conveyor that feeds a single processor and a robot or worker can pick them up and reinsert them later in the queue. 

In this paper, we adopt a similar setting but with an additional set of restrictions. In particular, we consider the scenario in which the handling mechanism consists \red{of} a conveyor that feeds a single machine \red{in a given sequence} and a robot, placed along the line, \red{that is able to alter this sequence}. 
The robot \red{may} pick parts from the conveyor as they are moving, stacks them on a buffer of finite capacity from which it takes the parts and places them back on the conveyor, in their final positions \red{(which is later in the original sequence due to the conveyor movement)}.
Since the stack is managed according to a Last-In-First-Out (LIFO) policy, only the last \red{part} put into the stack can be extracted and re-inserted in the new sequence. 
 This setting 
has been introduced in
\cite{bib:npppr2019} where the authors present some preliminary results on the corresponding rescheduling problem.
%

An area of research which is strictly related to the problem we address 
in this article, deals with {\em sequence coordination} in supply chain (SC). 
One of the main tasks in SC management is indeed coordination of several 
activities performed at different stages of the chain.
An obvious overall goal consists in successfully meeting customers
needs and achieving a good level of efficiency and performance.
Usually, in a coordinated SC, two or more processes 
subject to their mutual coordination are considered. This involves fitting 
the schedules of different manufacturers together when some planned or 
unexpected schedule changes are experienced by one or more of them \citep{bib:is2015}. 
In this context, \cite{bib:ahp2006} consider 
two consecutive stages of a supply chain where ideal job sequences 
(typically, different for the two stages) are given. 
The authors address a supply chain scheduling coordination problem consisting in 
finding a trade-off schedule that takes into account the ideal schedules
of both stages. They propose a number of polynomial-time algorithms for 
different versions of the problem, namely from the point of view of the 
manufacturer, then from that of the supplier, 
and, finally, they consider the situation when both stages cooperate to 
obtain a satisfactory compromise schedule.
A related coordination problem is addressed in \cite{bib:admp2001} where
two departments in a manufacturing facility process the jobs in batches.
In a first department a setup is paid whenever a certain attribute changes from
one batch to another, whereas in a second department, the setup 
is associated to a different attribute.
The problem of finding a unique sequence of jobs in order to minimize the overall 
setup cost arises. The authors prove that the problem is NP-hard and
propose an effective heuristic approach.
\red{
The above coordination problems are also tightly connected to those addressed in \emph{multi-agent scheduling}, 
a research field which received great attention more recently:
two or more agents have to agree on a fair, i.e acceptable schedule of their distinct
sets of jobs on a common processing resource (see, for instance, \cite{bib:LP10,bib:perez,bib:acnp2019}).
}
Another important and fruitful research stream, connected to the problem addressed here, concerns
the so-called {\em rescheduling} (or dynamic scheduling). 
In many real-world scenarios scheduling is an activity requiring frequent revisions due to unexpected changes such as, for instance, machine breakdown or unavailability, delay in the arrival of materials, job cancellation, due date changes, etc. With the terms rescheduling and dynamic scheduling many authors indicate the problem of scheduling in the presence of real-time events; 
this includes the process of updating the current schedule to face previously unknown events such as the arrival of new jobs \citep{bib:hlp2007}, disruptions \citep{bib:nbjan2018}, perturbation of the originally given or estimated data \citep{bib:hp2010}, etc. 
A recent and effective application of these concepts in the health care sector can be found in \cite{bib:bps2019}.
Two different reviews of the state-of-the-art of currently developing research on dynamic scheduling are given in \cite{bib:op2009} and \cite{bib:vhl2003}. 

The most common strategies (called predictive-reactive), when facing any unexpected change in the scenario, consider both the possibility of local adjustments and a whole re-computation of the current schedule with the aim of (locally) improving shop-efficiency.
Together with the latter objective, it is also of interest to measure how much the new schedule deviates from the original schedule. 
This concept (usually referred to as {\em stability}) is important since, typically, modification costs increase with the magnitude of such deviation, whereas big and frequent schedule changes often may cause undesired nervousness phenomena due to a lack of continuity.

From the pioneering works by \cite{bib:dk1995} and \cite{bib:wsc1993} up to the most recent papers (see, e.g.~\citealt{bib:dnpz2019,bib:nsdzc2019}), 
{\em robustness} is a widely adopted concept in scheduling and it is an alternative (pro-active) approach that tries to design a schedule which {\em a priori} guarantees a certain level of efficiency, given a set of possible scenarios. 
This way stability is preserved while performance is kept above a fixed level.

\red{
Indeed, our problem can be viewed in the framework of the so called  \emph{recoverable robustness}, see \cite{bib:LM2009}.
A recoverable robust solution is not necessarily feasible in all scenarios of a robust optimization problem but
it can be made feasible by applying a (simple, quick) recovery algorithm to it. This concept has been 
investigated in several application contexts (mostly in  transportation problems) and 
there is a limited literature also in scheduling. For instance, in \cite{bib:Akker2018}
an initial solution of a scheduling problem is given and 
in each scenario 
some {recovery actions} are performed to make the solution acceptable again. 

The algorithms presented in this work can be regarded as recovery algorithms to achieve certain
objective benchmarks (rather than feasibility) in different problem settings which are illustrated below.
}

In the special rescheduling problem addressed in this paper, we consider several objective functions, namely: total weighted completion time, maximum lateness, number of late jobs and weighted number of late jobs. 
We devise strictly polynomial-time optimization algorithms based on dynamic programming recursions for the first three objectives. 
In contrast, we prove that the problem is NP-hard when minimizing the weighted number of late jobs, but still permits a pseudo-polynomial dynamic programming algorithm in that case.
%
The remainder of this work is organized as follows: 
After a rigorous statement of the problem in Section~\ref{sec:model}, 
\red{Sections~\ref{sec:weightedcompletion}, \ref{sec:lateness}, and~\ref{sect:latejobs} 
are devoted to the first three objective functions and 
describe the corresponding {\em efficient} solution algorithms.}
For the problem with the weighted number of late jobs objective, 
in Section~\red{\ref{sec:weightedlatejobs}}, we prove its complexity and propose a pseudo-polynomial algorithm.
\red{
Section~\ref{sec:emp} illustrates by a short experimental study the behaviour of the rescheduling process depending on the stack size.
Finally, concluding remarks are given in Section~\ref{sec:conc}.
}

\section{Problem statement} \label{sec:model}

In this section, we give a formal statement of the problem under 
consideration and introduce the notation used throughout the paper. 
\red{Hereafter, we use the term ``job" to refer to both, a physical 
piece of material which is processed by some machine or resource and the 
process itself (characterized by a certain duration and possible 
additional data).}

Let us consider a deterministic single-machine environment where we are given a set $J$ of $n$ jobs that have to be scheduled according to a  {\em regular}, i.e.\ non-decreasing, objective function $f(\sigma)$
of the job completion times $C_j(\sigma)$, $j = 1,\ldots,n$. 
For such a schedule $\sigma = \langle \sigma_1, \sigma_2, \dots, \sigma_n \rangle$ with $\sigma_k\in J$, $k=1,\dots, n$, if $i \leq j$,
we refer to the ordered set of jobs $\langle\sigma_i,\sigma_{i+1},\ldots,\sigma_j\rangle$ as the {\em subsequence} $\sigma(i,j)$.
%
Moreover, for each job $j \in J$ we know its processing time $p_j$ and, possibly, a due date $d_j$ and a weight $w_j$. 
\red{
As usual for scheduling problems, we will assume that all these values are nonnegative integers.
However, this property will be required only for the dynamic programming algorithm in Section~\ref{sec:weightedlatejobs}.}
Additionally, we are given an initial sequence $\ssq$ in which the $n$ jobs of set $J$ are numbered from $1$ to $n$. 
So, $\ssq=\langle 1,2, \ldots, n \rangle$ and we say that {\em job $j$ is placed in the $j$-th position} to indicate that it is the $j$-th job of the sequence $\ssq$.
Clearly, if $i,j \in J$ and $i\le j$, we have 
$\ssq(i,j) = \langle i,i+1,\ldots,j \rangle$.

In the problem addressed here, we look for a new job sequence $\sigma$ such that 
\begin{enumerate}[label=$(\roman*)$]
    \item $f(\sigma)$ is minimum and
    \item $\sigma$ can be derived from $\ssq$ by applying a 
(constrained) number of {\em feasible moves}.
\end{enumerate}
Any move in this scheduling environment is performed by a 
physical device (e.g.\ a robot arm) that operates on
a sequence of parts, each associated to one job, arranged 
in an ordered sequence along a line (e.g.\ on a moving conveyor). 
The initial sequence on this line corresponds to $\ssq$.
The considered device 
$(i)$ picks up a job $j$;
$(ii)$ places {$j$} in the {\em stack} with bounded capacity $S$; 
$(iii)$ possibly picks up other \red{jobs} and places them in the stack;
$(iv)$ picks the \red{jobs} from the stack, according to a LIFO policy, 
and places them back (on the conveyor) at a suitable position
{\em later} in the sequence. 
We assume that there is always enough space between jobs to place even the whole content of the stack between any two \red{jobs} on the line.

In this setting, moved jobs can only be postponed, as in fact the robot arm picks the \red{jobs} from the line while the conveyor feeding the processor moves ahead.
Hence, reinsertion can only happen in a later position of the sequence.

\begin{defi}\label{defi:move}
A move $i\to j$, $i<j$, consists of deleting job $i$ from a given sequence 
and reinserting it immediately after all jobs of the subsequence $\ssq(i+1,j)$.
\end{defi}
Figure~\ref{fig:move} illustrates the process for three consecutive moves
on a sequence $\ssq$ with $n=9$ jobs. 
The picture emphasizes the characteristics of two types of feasible moves for the above mentioned physical device. 
A move $1\to 3$ in which job 1 is placed immediately after job 3 is performed first. 
Then, moves $5\to 9$ and $7\to 9$ are done:
\red{Job} $5$ is loaded into the stack directly before $7$, which is then extracted from the stack and placed just after \red{job} $9$. 
After that, $5$ is placed back on the line, right after $9$ and $7$. 
It is clear that the latter operations require that the stack capacity $S$ is
at least $2$. 
\begin{figure}[htbp]
    \centering
    \includegraphics[width=.7\linewidth]{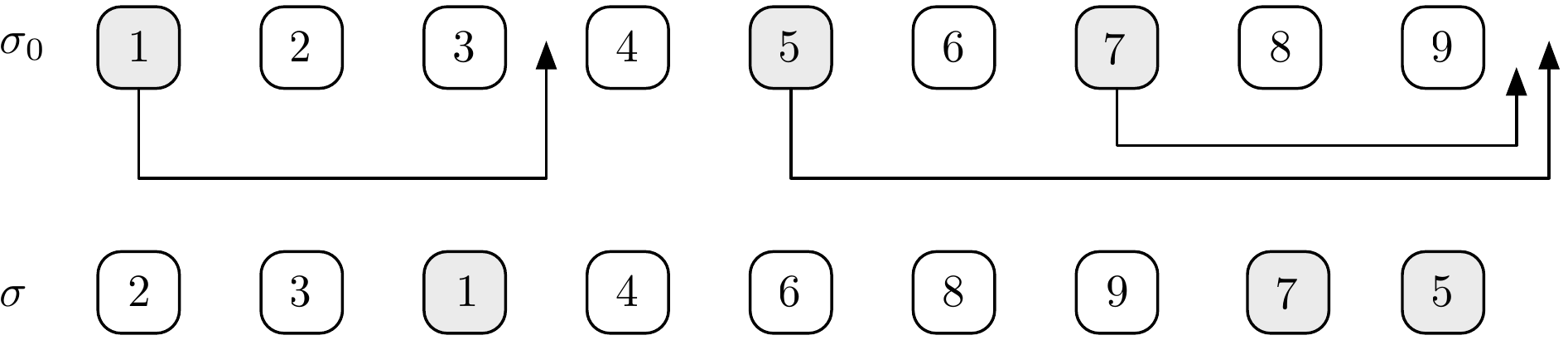}
    \caption{Three feasible moves.}
    \label{fig:move}
\end{figure}

Observe that, if we start from $\ssq$, due to the stack LIFO policy, Definition~\ref{defi:move} is consistent even after a number of moves has been performed.

Following the above considerations, we are now characterizing the compatibility of two moves.
\begin{defi}
Two moves $i_1\to j_1$ and $i_2\to j_2$ with $i_1<i_2$ are {feasible} if: 
\begin{enumerate}[label=$(\roman*)$]
    \item either $j_1 < i_2$, i.e.\ $i_1<j_1<i_2<j_2$, and the moves are called {\em sequential};
    \item or $i_1 < i_2 < j_2 \leq j_1$, and move $i_2\to j_2$ is {\em nested in} move $i_1\to j_1$.
\end{enumerate}
\end{defi}
Note that, as a consequence of Definition~\ref{defi:move}, it is easy to see that
\begin{itemize}
    \item when move $i_2\to j_2$ is nested in move $i_1\to j_1$, $i_2$ precedes $i_1$ in the final sequence even if $j_1=j_2$;
    \item although the device picks jobs in increasing index order, the sequence $\sigma$, obtained after a set of feasible moves considering the LIFO policy of the stack, is independent from the particular order in which the moves are performed. 
    \item given a sequence $\sigma \neq \ssq$, if there exists a set of feasible moves to reach $\sigma$ starting from $\ssq$, then such a set is unique.
\end{itemize}

We next define the \textit{level} of a move:
\begin{defi}\label{defi:level}
A move that does not contain nested moves is said to be at level $1$. 
Recursively, a 
move $m$ is at level $\ell > 1$ if $\ell$ is the smallest value such that $m$ contains feasible nested 
moves at level up to $\ell-1$.
\end{defi}
Clearly, a move at level $\ell$ is feasible only if $\ell \le S$.
In the example shown in Figure~\ref{fig:move}, $1 \to 3$ and $5 \to 9$ are sequential moves while $7 \to 9$ is nested in $5 \to 9$. In particular, $5 \to 9$ is at level 2.

In the remainder of this paper, we adopt the following definition for a move.
\begin{defi}
A move $i\to j$ at level $\ell$ is denoted as $(i,j,\ell)$. 
For notational convenience, we also define $(i, i, \ell)$ for every level $\ell$, meaning that job $i$ is not moved at all.
\end{defi}

Note that in our setting the stack capacity constraint imposes the index $\ell$ to be an integer within the range $1$ to $S$, 
i.e.\ there must not be more than $S$ moves nested inside each other.

\begin{defi}\label{defi:feasibleschedules}
The set of feasible schedules $\mathcal{F}_S$ for the rescheduling problem where the stack capacity is limited by $S$, 
is comprised of all the schedules resulting from a set of feasible moves at levels up to a maximum of $S$ starting from $\ssq$.
\end{defi}

In this paper we consider the minimization of three classical objective functions in scheduling theory to evaluate the sequence obtained from $\ssq$ through the LIFO-constrained moves: 
total weighted completion time, maximum lateness (with extension to any regular function), and number of late jobs. For the latter we consider both, the weighted and unweighted case. 

For any feasible schedule $\sigma$ of the jobs of $J$ and any job $j \in J$, we define the following quantities: $C_j\left(\sigma\right)$ is the completion time of $j$ in $\sigma$, $L_j\left(\sigma\right) = C_j\left(\sigma\right) - d_j$ is the lateness of job $j$ in $\sigma$ and 
$$
U_j\left(\sigma\right) = \left\{
\begin{array}{cl}
1, & \mbox{ if } C_j\left(\sigma\right) > d_j,\\
0, & \mbox{ otherwise,}
\end{array}\right.
$$ 
indicates if job $j$ is late. 
Note that the completion time of job $j$ in the initial sequence $\ssq$ is given by $C_j(\ssq) = \sum_{k=1}^j p_k$ and its lateness by $L_j(\ssq)$. We indicate with $P(i,j) = \sum_{k = i}^{j} p_k$ the total processing time of the subsequence $\ssq(i,j)$
and with $W(i,j) = \sum_{k=i}^j w_k$ its total weight. 
Moreover, for a given subsequence $\sigma(i,j)$, the maximum lateness within the subsequence is 
$$L_{\max}(\sigma(i,j)) = \max_{k = \sigma_i,\ldots,\sigma_j} \left\{ L_k(\sigma) \right\}.$$ 
Furthermore, if $\phi_j: \real_{\ge 0} \to \real$, $j \in J$, are given regular functions,
we denote the maximum of regular function objective as
$$\Phi_{\max}(\sigma) = \max_{j\in J}\{\phi_j(C_j(\sigma))\}.$$

In accordance with the standard Graham's three field notation, we indicate the problems addressed in this paper as follows:  \red{\totwecompletion, \maxlateness, \maxreg, \numberlatejobs{} and \wnumberlatejobs{}}
where the last field refers to the particular objective we want to minimize in 
rearranging the given initial schedule $\ssq$ through the above described mechanism.

\medskip
In this paper we also consider the following special generalization of the above problems.
Let $\Omega \subseteq J$ be a given subset of {\em movable} jobs and consider as feasible solutions only the schedules obtained from $\ssq$ by a set of feasible moves (at levels up to a maximum of $S$),
where only moves $i \to j$ with $i \in \Omega$ are involved.
Jobs from $J\setminus \Omega$ can not be moved, whereas 
jobs from $\Omega$ {\em may} be moved but it is not mandatory to move them.
In the remainder of the paper we will refer to this variant as {\em $\Omega$-constrained} problem. 
Observe that our original rescheduling problems are
special cases of these $\Omega$-constrained problems in which $\Omega = J$. 

\section{Total weighted completion time}  \label{sec:weightedcompletion}

We start our study by considering the minimization of the weighted sum of job completion times, for which we present a polynomial time dynamic program.

First, we can observe that for any partition of a schedule
$\sigma = \langle \sigma_1, \sigma_2, \ldots ,\sigma_n \rangle$ 
in $Q$ adjacent subsequences 
$\sigma(u_q,v_q) = \langle \sigma_{u_q},\ldots,\sigma_{v_q} \rangle$ 
with $u_q \leq v_q$
such that $u_1=1$, $v_Q=n$ and $u_q = v_{q-1} + 1$ for all $q=2,\ldots,Q$,
the objective function can be expressed as follows:
\begin{equation}  \label{def:z1}
f^{(1)}\left(\sigma\right) = \sum_{j \in J} w_j C_j\left(\sigma\right) = \sum_{q=1}^Q f^{(1)}(\sigma(u_q,v_q)), 
\end{equation}
where the term $f^{(1)}\left(\sigma(u_q,v_q)\right)$ denotes the total weighted completion time for the subsequence $\sigma(u_q,v_q)$, which in turn is the sum of the weighted completion times of the jobs in the subsequence. Denoting by $J_q$ the set of jobs of subsequence $\sigma(u_q,v_q)$, clearly we have:
\begin{equation*}
f^{(1)}\left(\sigma(u_q,v_q)\right) = \sum_{j \in J_q} w_j C_j\left(\sigma\right)\ \ \ \forall\, q=1,\ldots,Q.
\end{equation*}

Consider the set $\mathcal{M}$ of all feasible moves $(i,j,\ell)$ that allows to reach $\sigma$ starting from $\ssq$. 
Let $\mathcal{\widetilde{M}}\subseteq \mathcal{M}$ indicate the set of moves $(i,j,\ell)$ which are not nested in any other move $(h,k,\ell')\in\mathcal{M}$ at a higher level $\ell'>\ell$. 
By definition, the moves $\mathcal{\widetilde{M}}$ are sequential moves.
In $\mathcal{\widetilde{M}}$ we also include {\em identity} moves $(i,i,\red{1})$ which correspond to not moving job $i$. (Clearly, parts of $\sigma$ which are unchanged with respect to $\ssq$, i.e.\ if the subsequence 
$\sigma(i,j)=\ssq(i,j)$, can be represented by a set of consecutive identity moves $(k,k,\red{1})$ with $k=i, \dots, j$.)
Let us denote $\mathcal{\widetilde{M}}$ as follows:
\begin{align}  \label{eq:decomp}
\widetilde{\M} = \left\{ (u_1,v_1, \ell_1), (u_2,v_2,\ell_2), \ldots, (u_r,v_r,\ell_r)\right\}
\end{align} 
with $1= u_1 \le v_1$, $v_r = n$, $v_{q-1} + 1 = u_{q} \le v_q$, $q=2,\ldots, r$. 
Because of the additivity principle \eqref{def:z1}, the cost $f^{(1)}(\sigma)$ of schedule $\sigma$ can be computed by adding---to the cost of $\ssq$---the contributions of the $r$ feasible moves of $\widetilde{\M}$.
%
Note that in \eqref{eq:decomp} each move $(u_q,v_q,\ell_q)$ may contain nested moves, so that its contribution depends on these nested moves as well.
%
For instance, the sequential moves of $\widetilde{\M}$ in the example illustrated by Figure~\ref{fig:move} are $(1,3,1)$, $(4,4,1)$ and $(5,9,2)$. 
Move $(5,9,2)$ contains the nested move $(7,9,1)$ and therefore its contribution must also account for the one of $(7,9,1)$.


Hereafter, we present a dynamic programming algorithm for determining the optimal set of moves yielding a minimum cost schedule $\sigma^*$, starting from $\ssq$. 
The correctness of this algorithm straightforwardly follows from the following observation
which corresponds to a standard optimality principle: 
In any optimal schedule $\sigma^*$, there exists a partition as in \eqref{eq:decomp}, 
such that each subsequence in the partition is optimal for the subproblem containing 
only the jobs of that subsequence.
The basic step of the dynamic program consists of computing the cost of a move $(i,j,\ell)$, 
at level $\ell$ based on the knowledge of optimal costs for any subsequence of $\sigma(i,j)$ in which the stack capacity is $\ell - 1$. 
\begin{defi}
The cost $c(i,j,\ell)$ is defined as the {\em minimum} total weighted completion time {\em variation} when a move $(i,j,\ell')$ at some level $\ell' \leq \ell$ is performed.
The cost $\mu^*(i,j,\ell)$ is the \red{minimum cost variation of }
subsequence $\ssq(i,j)$, when the stack capacity is equal to $\ell$. 
\end{defi}
Observe that $c(i,j,\ell)$ includes the \red{optimal cost variation produced by} 
all possible nested moves at lower levels and hence it takes into account information of all $c(h,k,\ell-1)$ for $i<h\le k\le j$ which in turn are summed up into $\mu^*(i+1,j,\ell-1)$.
While $\mu^*(i,j,\ell)$ gives the \red{optimal cost variation} 
for the subproblem implied by $i$ and $j$, 
$c(i,j,\ell)$ considers the particular solution where the move $i\to j$ must be performed.

Hereafter, we show how to compute the variations in the schedule overall cost, for different moves and their combinations.
%
%
We first consider the effect of {\em single moves} on the total weighted completion time.
In this context, the following statement results directly from the definition of the completion time.
\begin{obs}\label{prop:TWCT_noEffects}
The completion time of each job preceding $i$ and following $j$ in the initial sequence $\ssq$ is not affected by the move $(i,j,\ell)$.
\end{obs}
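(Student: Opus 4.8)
Observation~\ref{prop:TWCT_noEffects} asserts that a move $(i,j,\ell)$ leaves unchanged the completion time of every job placed before $i$ in $\ssq$ and every job placed after $j$ in $\ssq$.

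**Plan.** The proof is essentially an unwinding of Definition~\ref{defi:move} combined with the fact that completion times on a single machine with no idle time depend only on the \emph{multiset} of jobs occupying the positions up to and including the job in question. First I would record this elementary fact: in any feasible schedule $\sigma$ with no inserted idle time, $C_{\sigma_k}(\sigma) = \sum_{t=1}^k p_{\sigma_t}$, so $C_{\sigma_k}(\sigma)$ is determined by the \emph{set} of jobs occupying positions $1,\dots,k$, irrespective of their internal order. Hence it suffices to show that for a job $a < i$ (resp.\ $a > j$) the set of jobs weakly preceding $a$ is the same before and after the move.

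**Key steps.** (1) Consider a job $a$ with $a < i$, i.e.\ $a$ lies in the prefix $\ssq(1,i-1)$. By Definition~\ref{defi:move}, the move $(i,j,\ell)$ only deletes $i$ and reinserts it immediately after $\ssq(i+1,j)$; every job with index $< i$ keeps both its relative order and its absolute position, and no job is ever moved to a position ahead of its original one (moves only postpone, as stressed after Definition~\ref{defi:move}). Therefore the set of jobs occupying the first $(\text{position of }a)$ slots is exactly $\{1,\dots,a\}$ both in $\ssq$ and after the move, so $C_a$ is unchanged. (2) Now consider $a$ with $a > j$. All jobs of $\ssq(1,j)$ — which is precisely the set $\{1,\dots,j\}$, the only jobs relocated by the move — remain, as a set, within the first $j$ positions after the move: job $i$ is reinserted immediately after $\ssq(i+1,j)$, hence still before position $j+1$, and the jobs $i+1,\dots,j$ shift left by exactly one slot but stay inside the block of the first $j$ positions. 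Jobs $j+1,\dots,n$ are untouched. Consequently the set of jobs weakly preceding $a$ is $\{1,\dots,a\}$ in both sequences, giving $C_a$ unchanged. (3) Finally, invoke the consistency remark after Definition~\ref{defi:move} (that Definition~\ref{defi:move} remains valid after a sequence of moves, since we always start from $\ssq$) so that the argument applies verbatim when $(i,j,\ell)$ is performed on an already-rearranged sequence rather than on $\ssq$ itself: the jobs outside the window $[i,j]$ of indices are exactly those outside the affected block of positions.

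**Expected main obstacle.** There is no real difficulty here — the statement is a direct corollary of the definition — so the only care needed is bookkeeping: being explicit that a ``move'' never advances any job, that the reinsertion point of $i$ stays strictly before the slot following the $j$-block, and that jobs $i+1,\dots,j$ merely shift by one position without leaving that block. I would keep the proof to two or three sentences, phrased in terms of ``the set of jobs completed no later than $a$ is unchanged,'' and not belabour the trivial arithmetic $C_a = \sum_{k=1}^{a} p_k$.
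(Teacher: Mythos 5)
Your proof is correct and matches the paper's (implicit) reasoning: the paper offers no formal argument, stating only that the observation ``results directly from the definition of the completion time,'' and your unwinding --- that $C_a$ depends only on the set of jobs occupying positions up to $a$, and that the move $(i,j,\ell)$ merely permutes the jobs within the index window $[i,j]$ while leaving the prefix $\ssq(1,i-1)$ and the suffix $\ssq(j+1,n)$ with the same sets of predecessors --- is exactly the intended justification. No issues.
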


Based on the initial schedule $\ssq$, the {\em cost variation} $m(i,j)$ due to moving job $i$ after job $j$ is expressed as
\begin{equation}
m(i,j) = w_i \sum_{k=i+1}^{j} p_k - p_i \sum_{k=i+1}^j w_k \ \ \ \forall\, i,j \in J, \ i<j. \label{def_m}
\end{equation}
The first term in (\ref{def_m}) indicates the increase of the weighted completion time of job $i$ while 
the second term indicates the total decrease of the weighted completion times of the jobs in $\ssq(i+1,j)$. 
In addition, because of the definition of the cost of a move~\eqref{def_m}, it is easy to verify the following:
\begin{obs}\label{prop:effectMove1}
The effect of a move $(i,j,\ell)$ at any level $\ell>1$ does not depend on the order of the jobs in the subsequence $\ssq\left(i+1,j\right)$.
\end{obs}
As a consequence, because of the LIFO constraint, the cost variation due to a {\em single} move $(i,j,\ell)$ is always equal to $m(i,j)$, for all possible levels $\ell$.

As far as {\em sequential moves} are concerned, we recall that, 
owing to Equation~\eqref{def:z1}, the effect of two sequential moves $(i_1,j_1,\ell_1)$ and $(i_2,j_2,\ell_2)$, $j_1 < i_2$, on the objective $f^{(1)}$ equals the sum of the effects of each move.
%



Moves at level $\ell = 1$ can not contain any nested moves, so the cost $c(i,j,1)$ of a move is equal to $m(i,j)$. Obviously, when a job $i$ is not moved, the associated cost is null and $c(i,i,1)=0$.
Hence, we may give a recursive expression of the optimal cost for 
the subsequence $\ssq(i,j)$ 
as follows:
\begin{align}\label{eq:mu_at_level1}
\left\{
    \begin{array}{ll}
     \mu^*(i,i,1) = 0 &\ \ \forall\, i \in J\\
     \mu^*(i,j,1) =\min\{ \min_{k \in \ssq(i,j-1)} \{c(i,k,1) + \mu^*(k+1,j,1)\}, c(i,j,1)\} &\ \ \forall\, i, j\in J,\ i < j.
    \end{array}
\right.
\end{align}

Even when dealing with {\em nested moves}, i.e.\ for $\ell>1$, the 
cost variation of a set of moves is additive: Following the same line of 
arguments yielding to Observation~\ref{prop:effectMove1} and Equation~\eqref{def_m}, 
the overall effect on $f^{(1)}$ of two (or more) nested moves is equal 
to the sum of the effects of each move. 

In order to compute the cost $c(i,j,\ell)$ for a move $(i,j,\ell')$ with $\ell' \leq \ell$, 
one clearly has to take into account the possibility of optimally rearranging the subsequence $\ssq(i+1,j)$ with moves at levels lower than $\ell'$. 
Thus, after computing all optimal costs $\mu^*(i,j,\ell-1)$ obtainable from a set of moves up to level $\ell-1$ for the subsequence $\ssq(i,j)$, it is possible to determine the (optimal) cost $c(i,j,\ell)$.
This in turn allows to compute the optimal costs $\mu^*(i,j,\ell)$ for the subsequence $\ssq(i,j)$ at level $\ell$. Formally, the following recursion holds:
\begin{align}\label{eq:mu_recursion}
\left\{
\begin{array}{ll}
c(i,j,1) = m(i,j)                          &\ \ \  \forall\, i,j \in J,\ i< j \\
c(i,i,\ell) = 0                            &\ \ \  \forall\, i \in J,\ \forall\, \ell=1,\ldots,S \\
c(i,j,\ell) = m(i,j) + \mu^*(i+1,j,\ell-1) &\ \ \  \forall\, i,j \in J,\ i<j,\ \forall\, \ell=2,\ldots,S.
\end{array}
\right.
\end{align}
In Equation~\eqref{eq:mu_recursion}, $\mu^*(i,j,\ell)$ is computed in a similar way as in~\eqref{eq:mu_at_level1}:
\begin{align}  \label{eq:mu_at_levelell}
\left\{
    \begin{array}{ll}
     \mu^*(i,i,\ell) = 0                                                                         &\ \ \forall\, i \in J \\
     \mu^*(i,j,\ell) =\min\{ \min_{k \in \ssq(i,j-1)} \{c(i,k,\ell) + \mu^*(k+1,j,\ell)\}, c(i,j,\ell)\} &\ \ \forall\, i, j\in J,\ i < j. 
    \end{array}
\right.
\end{align}

Finally, the optimal solution value is found by computing $\mu^*(1,n,S)$ such that the objective function value of the optimal schedule equals
$$f^{(1)}(\sigma^*) = f^{(1)}(\ssq) + \mu^*(1,n,S).$$

As for the computational complexity of the above dynamic program, we observe that
the computation of all $m(i,j)$, i.e.\ the costs at level $\ell=1$, requires $O(n^2)$ time. 
For each level $\ell= 1,\ldots,S$ and for each ordered pair of jobs $(i,j)$, the algorithm must compute the cost $c(i,j,\ell)$ of the corresponding move and the optimal subsequence cost $\mu^*(i,j,\ell)$. 
Computing all $c$ values requires $O(n^2)$ time at each level, since each value is computed in $O(1)$, while the computation of $\mu^*(i,j,\ell)$ in Equation~\eqref{eq:mu_at_levelell} has a cost of $O(n^3)$ time for each level $\ell$. Hence, the overall complexity of the algorithm is $O(n^3 S)$ which is upper bounded by $O(n^4)$.

The results above can be summarized as follows.
\begin{thm}
Problem \totwecompletion{} is polynomially solvable in time $O(n^4)$.
\end{thm}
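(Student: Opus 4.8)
The plan is to establish the theorem by proving that the dynamic program laid out above is correct and then counting its operations. The structural foundation is the uniqueness of the move set realizing a given schedule (noted in Section~\ref{sec:model}): every feasible $\sigma$ corresponds to a unique set $\M$ of feasible moves, whose non-nested members $\widetilde{\M}$ are sequential and partition $\langle 1,\dots,n\rangle$ into the adjacent blocks of \eqref{eq:decomp}. By the additivity \eqref{def:z1}, $f^{(1)}(\sigma)$ equals $f^{(1)}(\ssq)$ plus the sum of the cost variations contributed by these blocks, and within a block the outermost move together with everything nested in it contributes a variation that — I claim — decomposes additively into the single-move term $m(\cdot,\cdot)$ of \eqref{def_m} and the variation of the internally rearranged subsequence. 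Granting this, the optimality principle recalled above applies at every level, so it suffices to show that $c$ and $\mu^*$ as defined by \eqref{eq:mu_recursion}–\eqref{eq:mu_at_levelell} really equal the claimed minimum cost variations, and then to read off $f^{(1)}(\sigma^*)=f^{(1)}(\ssq)+\mu^*(1,n,S)$.

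I would prove the correctness of \eqref{eq:mu_recursion}–\eqref{eq:mu_at_levelell} by a double induction: an outer induction on the stack capacity $\ell=1,\dots,S$ and, for fixed $\ell$, an inner induction on the block length $j-i$. For $\ell=1$ no nesting is possible, so a move $(i,j,1)$ contributes exactly $m(i,j)$ and an identity move contributes $0$; the recursion \eqref{eq:mu_at_level1} for $\mu^*(i,j,1)$ is then correct because in an optimal level-$1$ arrangement of $\ssq(i,j)$ the job in position $i$ is either moved to some $k<j$ — after which $\ssq(k+1,j)$ is arranged optimally and independently by \eqref{def:z1} — or the whole block is the single move $(i,j,1)$, which are exactly the alternatives minimized over. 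For the inductive step, consider $c(i,j,\ell)$ with $\ell\ge 2$: performing $i\to j$ raises $C_i$ by $P(i+1,j)$ and lowers every completion time of $\ssq(i+1,j)$ by $p_i$, so by Observation~\ref{prop:effectMove1} the resulting weighted net variation is $m(i,j)$ no matter how $\ssq(i+1,j)$ is afterwards reordered by the moves nested inside, which are at levels $\le\ell-1$ by Definition~\ref{defi:level}; by the outer induction hypothesis the best such internal reordering contributes $\mu^*(i+1,j,\ell-1)$, so $c(i,j,\ell)=m(i,j)+\mu^*(i+1,j,\ell-1)$. The recursion \eqref{eq:mu_at_levelell} for $\mu^*(i,j,\ell)$ is then justified by the same first-move argument as in the base case.

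The one point that needs genuine care — and the only real obstacle — is the additive separation just invoked: that the variation caused by the outer move $i\to j$ and the variation caused by the moves nested in it do not interact. This holds because deleting job $i$ from in front of $\ssq(i+1,j)$ merely shifts every completion time inside that block by the same constant $-p_i$, and a cost \emph{variation}, being a difference of two weighted sums of completion times, is invariant under such a uniform shift; symmetrically, the quantities $P(i+1,j)$ and $W(i+1,j)$ that determine $m(i,j)$ through \eqref{def_m} are unchanged when $\ssq(i+1,j)$ is internally permuted (Observations~\ref{prop:TWCT_noEffects} and~\ref{prop:effectMove1}), and completion times outside $[i,j]$ are untouched. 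Hence the decomposition $c(i,j,\ell)=m(i,j)+\mu^*(i+1,j,\ell-1)$ is exact, not merely an upper bound, which is what makes the recursion compute the true optimum.

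The complexity bound then follows by a routine count: tabulating all $m(i,j)$ from \eqref{def_m} costs $O(n^2)$ via prefix sums of the $p_k$ and $w_k$; for each of the $S$ levels there are $O(n^2)$ ordered pairs $(i,j)$, each $c(i,j,\ell)$ obtained in $O(1)$ from \eqref{eq:mu_recursion} and each $\mu^*(i,j,\ell)$ in \eqref{eq:mu_at_levelell} obtained as a minimum over $O(n)$ split points, so a single level costs $O(n^3)$. The total is $O(n^3S)$, and since at most $n$ jobs can ever be simultaneously nested we have $S\le n$, giving the bound $O(n^4)$ stated in the theorem.
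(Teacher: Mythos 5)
Your proposal is correct and follows essentially the same route as the paper: the additive decomposition \eqref{def:z1} into the sequential blocks of $\widetilde{\M}$, the invariance of a move's contribution $m(i,j)$ under internal permutation of $\ssq(i+1,j)$ (Observation~\ref{prop:effectMove1}), the two-level recursion \eqref{eq:mu_recursion}--\eqref{eq:mu_at_levelell}, and the $O(n^3 S)\subseteq O(n^4)$ count. Your explicit justification of the additive separation of the outer move from its nested rearrangement, and of $S\le n$, merely spells out steps the paper leaves implicit.
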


\subsection{Extension to the $\Omega$-constrained problem}  \label{sec:omega_wc}

It is easy to handle the special case in which only jobs from a given subset $\Omega \subseteq J$ of the jobs can be moved. 
For this purpose, we just make moves of jobs in $\Omega$ extremely costly. 
Hence, we can \red{still} use Recursions~\eqref{eq:mu_recursion} and \eqref{eq:mu_at_levelell} \red{but}
with the following adjusted \red{costs at level $\ell=1$} for a large enough constant $M$:
\red{
$$
c(i,j,1) = 
\left\{
\begin{array}{ll}
M, &  \mbox{ if }  i\in J\setminus\Omega, \ i < j,\\
\mbox{as in Expression~\eqref{eq:mu_recursion}}, &  \mbox{ otherwise},
\end{array}\right.
$$
for all $i,j \in J$ with $i \leq j$.}

\section{Maximum of regular functions minimization}  \label{sec:lateness}

In this section, we first focus on the minimization of the maximum lateness of the jobs and
present a dynamic programming solution algorithm related to the one presented in the previous Section~\ref{sec:weightedcompletion}. 
We then discuss how the same algorithm can be used to solve the more general problem \maxreg. 


\subsection{Minimization of maximum lateness}

Recall that for a given schedule $\sigma$, we indicate by $L_j(\sigma) = C_j(\sigma) - d_j$, $L_{\max}(\sigma) = \max_{j\in J}\{L_j(\sigma)\}$, and $L_{\max}(\sigma(i,j)) = \max_{k=\sigma_i,\dots,\sigma_j}\{L_k(\sigma)\}$.

Here, the objective function for any partition of a schedule $\sigma = \langle \sigma_1, \sigma_2, \ldots ,\sigma_n \rangle$ in $Q$ adjacent subsequences $\sigma(u_q,v_q) = \langle \sigma_{u_q},\ldots,\sigma_{v_q} \rangle$ with $u_q \leq v_q$ such that $u_1=1$, $v_Q=n$ and $u_q = v_{q-1} + 1$ for all $q=2,\ldots,Q$, fulfills the following decomposition property:
\begin{equation*}
L_{\max}(\sigma) = \max_{q = 1,\ldots,Q} \left\{L_{\max}(\sigma(u_q,v_q))\right\}.
\end{equation*}

We first consider the effect of single moves on the maximum lateness. A straightforward consequence of Observation~\ref{prop:TWCT_noEffects} is that a move $(i,j,\ell)$ does not affect the lateness of any job preceding $i$ and following $j$ in the initial sequence $\ssq$.
Hence, even in this case, moves involving disjoint subsequences can be evaluated independently.

Differently from Equation~\eqref{def_m}, 
we are not looking at the {\em variation} in the objective function due to a move $(i,j,\ell)$, but rather at the optimal value that the objective function would take if move $(i,j,\ell)$ were executed. 
Hence, we define the cost $g(i,j,\ell)$ as the {\em minimum} value of the maximum lateness in subsequence $\sigma(i,j)$ when 
\red{ moves up to level $\ell$ are done within it.}
For a single move $(i,j,\ell)$, at any level, the lateness of all jobs in $\ssq(i+1,j)$ decreases by $p_i$ while the lateness of job $i$ increases by $P(i+1,j)=\sum_{k=i+1}^j p_k$. As a consequence, we may compute the cost at level 1 as follows:
\begin{equation*}
g(i,j,1) = \max \left\{L_i(\ssq) + P(i+1,j),\ {L_{\max}}(\ssq(i+1,j)) - p_i \right\}\ \ \ \forall\, i,j \in J,\ i < j.
\end{equation*}
The term ${L_i(\ssq)} + P(i+1,j)$ indicates the new lateness of job $i$ while
the term ${L_{\max}}(\sigma(i+1,j)) - p_i$ indicates the new maximum lateness of the subsequence $\ssq(i+1,j)$. 
We also define the cost of not moving a job $i$:
\begin{equation*}
g(i,i,1) = L_i(\ssq)\ \ \ \forall\, i \in J. \label{1}
\end{equation*} 
 
Similar to the problem with total weighted completion time objective, also in this case the dynamic programming algorithm is based on computing the cost of a move $(i,j,\ell)$, at level $\ell$ starting from the optimal costs of any subsequence of $\sigma(i,j)$ at level $\ell - 1$. 
\begin{defi}
The cost $\lambda^*(i,j,\ell)$ is the value of the optimal solution of the subproblem restricted to subsequence $\ssq(i,j)$, when the stack capacity is equal to $\ell$. 
\end{defi}
Note that, as before, for $\ell>1$ the computation of $g(i,j,\ell)$ takes into account the lateness values 
\red{produced by} 
all possible nested moves at lower levels, i.e.\ the quantities $g(h,k,\ell-1)$ for $i<h\le k\le j$ which in turn yield the value $\lambda^*(i+1,j,\ell-1)$.

In general, different orders of the jobs in the subsequence $\ssq(i+1,j)$ imply different effects of a move $(i,j,\ell)$ on the objective function.
Similar to the total weighted completion time case, when computing the cost $g(i,j,\ell)$
we need to take into account the possibility of optimally rearranging 
the subsequence $\ssq(i+1,j)$ with moves at lower levels. 
Thus, only after computing the optimal cost $\lambda^*(i+1,j,\ell-1)$, 
obtainable from a set of moves at level 
at most
$\ell-1$ for the subsequence $\ssq(i+1,j)$, it is possible to determine 
$g(i,j,\ell)$.

Starting from the $g(\cdot)$ values at level $\ell$, it is then possible to compute the optimal costs $\lambda^*(i,j,\ell)$ for each subsequence $\ssq(i,j)$ at level $\ell$. 

In conclusion, the following recursion holds:
\begin{align}\label{eq:d_recursion}
\left\{
\begin{array}{ll}
g(i,j,1) = \max \{L_i(\ssq) + P(i+1,j),\ {L_{\max}}(\ssq(i+1,j)) - p_i \} & \ \ \ \forall\, i,j \in J,\ i < j\\
g(i,i,\ell) = {L_i(\ssq)} & \ \ \ \forall\, i \in J,\ \forall\,\ell=1,\ldots,S\\
g(i,j,\ell) = \max \{ L_i(\ssq) + P(i+1,j) , \lambda^*(i+1,j,\ell-1) - p_i \} & \ \ \ \forall\, i \in J,\ i < j,\ \forall\, \ell = 2,\ldots,S.
\end{array}
\right.
\end{align}
With Equation~\eqref{eq:d_recursion}, $\lambda^*(i,j,\ell)$ is computed in an analogous way as done in~\eqref{eq:mu_at_level1}:
%
%

\begin{align}\label{eq:lambda_at_levelell}
\left\{
    \begin{array}{ll}
     \lambda^*(i,i,\ell) = L_i(\ssq) & \ \ \ \forall\, i \in J, \ \forall\, \ell = 1,\ldots,S \\
     \lambda^*(i,j,\ell) = \min\{\min_{k \in \ssq(i,j-1)} \{ \max\{g(i,k,\ell) , \lambda^*(k+1,j,\ell)\}\}, g(i,j,\ell)\} & \ \ \ \forall\, i,j \in J,\ i < j,\ \forall\,\ell = 1,\ldots,S. 
    \end{array}
\right.
\end{align}

In the above Equation~\eqref{eq:lambda_at_levelell}, $\lambda^*(i,i,\ell)$ always equals the ``initial'' value ${L_i(\ssq)}$ while, only when we consider a (proper) subsequence with more than one job, the actual costs associated with moves are accounted for.
In particular, the optimal solution value $\lambda^*(i,j,\ell)$ can be computed as the best alternative among the solutions in which a move $(i,k,\ell)$, $k=i,\ldots,j$, is done: option $k=i$ corresponds to a solution in which \red{job $i$ is not moved};
if the best alternative is $k=j$, then $\lambda^*(i,j,\ell) = g(i,j,\ell)$; if $i<k<j$, the maximum lateness is the largest between the cost of move 
$i \rightarrow k$ at level at most $\ell$
and the optimal cost of the subsequence $\ssq(k+1,j)$ at level 
at most $\ell$.
Finally, the optimal solution value is found by computing $f^{(2)}(\sigma^*) = \lambda^*(1,n,S)$.

\smallskip
The computational complexity can be computed as in Section~\ref{sec:weightedcompletion}. The initialization, i.e.\ the computation of costs at level $\ell= 1$, requires $O(n^2)$ time.
For each level $\ell=1,\ldots,S$ the algorithm must compute the costs $g(\cdot)$ and $\lambda^*(\cdot)$ for each pair of jobs $(i,j)$ (with $i$ preceding $j$ in $\ssq$).
Computing all $g$ values requires $O(n^2)$ time at each level since each value is computed in $O(1)$.
The optimal costs $\lambda^*(i,j,\ell)$ can be computed, through Recursion~\eqref{eq:lambda_at_levelell}, 
in $O(n^3)$ for each level $\ell$.
Hence, the overall complexity of the algorithm is $O(n^3 S)$ which is upper bounded by $O(n^4)$.

Summarizing, we have the following result:
\begin{thm}
Problem \maxlateness{} is polynomially solvable in time $O(n^4)$.
\end{thm}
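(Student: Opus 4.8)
The plan is to verify that the dynamic program described above — the recursions \eqref{eq:d_recursion} and \eqref{eq:lambda_at_levelell} — correctly computes $\min_{\sigma \in \mathcal F_S} L_{\max}(\sigma)$, and then to tally its running time. The running time is immediate from the discussion preceding the theorem: there are $S \le n$ levels, at each level $O(n^2)$ pairs $(i,j)$, each $g(i,j,\ell)$ computed in $O(1)$ from $\lambda^*$ (after an $O(n^2)$ preprocessing of the $P(i+1,j)$ and $L_{\max}(\ssq(i+1,j))$ tables), and each $\lambda^*(i,j,\ell)$ computed in $O(n)$ via the inner minimisation over the split point $k$. This gives $O(n^3S) = O(n^4)$. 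So the substance is correctness.

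First I would make precise the structural fact underpinning the recursion: any $\sigma \in \mathcal F_S$ corresponds to a unique set $\M$ of feasible moves (as noted after Definition~\ref{defi:level}), and its top-level sequential moves $\widetilde\M = \{(u_1,v_1,\ell_1),\dots,(u_r,v_r,\ell_r)\}$ partition $\{1,\dots,n\}$ into consecutive blocks $\ssq(u_q,v_q)$, each of maximum nesting level $\le S$, with the moves inside distinct blocks independent. Combined with the decomposition property $L_{\max}(\sigma) = \max_q L_{\max}(\sigma(u_q,v_q))$ and Observation~\ref{prop:TWCT_noEffects} (a move affects only the lateness of jobs within its span), this yields the optimality principle: an optimal schedule restricted to any block is itself an optimal rearrangement of that block. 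I would then prove by double induction — outer on the level $\ell$, inner on the block length $j-i$ — the two claims: (a) $g(i,j,\ell)$ equals the minimum of $L_{\max}$ over all rearrangements of $\ssq(i,j)$ in which job $i$ is moved to the end of the block (i.e.\ move $i\to j$ is performed, at some level $\le \ell$, with $\ssq(i+1,j)$ optimally rearranged using levels $\le \ell-1$); and (b) $\lambda^*(i,j,\ell)$ equals the minimum of $L_{\max}$ over all rearrangements of $\ssq(i,j)$ using moves at levels $\le \ell$. For (a) the base case $\ell=1$ is the explicit formula, using that a level-$1$ move shifts each job of $\ssq(i+1,j)$ earlier by exactly $p_i$ and pushes $i$ later by $P(i+1,j)$; the inductive step uses (b) at level $\ell-1$ on $\ssq(i+1,j)$ together with the fact — to be justified from Definitions~\ref{defi:move}–\ref{defi:level} — that the shift-by-$p_i$ effect on the subsequence is unchanged by its internal rearrangement (the analogue of Observation~\ref{prop:effectMove1}, which for lateness follows because all jobs of $\ssq(i+1,j)$ uniformly lose their dependence on $p_i$). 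For (b), given $\lambda^*$ known for all shorter blocks at level $\ell$, any rearrangement of $\ssq(i,j)$ either leaves $i$ in place (value $\lambda^*(i,i,\ell)$ combined with $\lambda^*(i+1,j,\ell)$, the $k=i$ term), or moves $i$ to land immediately after some job $k$ with $i<k\le j$; the span of that move is exactly $\ssq(i,k)$, it is independent of whatever happens in $\ssq(k+1,j)$, so the value splits as $\max\{g(i,k,\ell),\lambda^*(k+1,j,\ell)\}$ — giving precisely \eqref{eq:lambda_at_levelell}. Taking $i=1$, $j=n$, $\ell=S$ gives the theorem.

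The main obstacle, and the step I would spend the most care on, is the claim that in any rearrangement of a block the top-level move of its first job $i$ has span exactly $\ssq(i,k)$ for some $k$, so that the block splits cleanly into "what the move $i\to k$ does" and "an independent rearrangement of $\ssq(k+1,j)$" — in other words, that no move can straddle the boundary between these two parts and that the level budget $\ell$ is shared correctly (the nested moves under $i\to k$ use levels $\le \ell-1$, encoded in $g$, while $\ssq(k+1,j)$ may again use levels up to $\ell$). This is exactly where the LIFO structure and the uniqueness of the move set are essential, and it should be extracted as a lemma about the combinatorial structure of $\mathcal F_S$ (feasibility of moves forces either "sequential" or "nested", never crossing — Definition of feasible moves), rather than re-derived inside the induction. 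Everything else is bookkeeping.
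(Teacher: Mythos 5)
Your proposal follows exactly the paper's own route: the same recursions \eqref{eq:d_recursion} and \eqref{eq:lambda_at_levelell}, the same decomposition of a schedule into top-level sequential blocks with the $\max$-decomposition of $L_{\max}$, and the same $O(n^3 S)\le O(n^4)$ complexity count. You merely make explicit (via the double induction on level and block length, and the lemma on non-crossing moves) the optimality principle that the paper states as ``straightforwardly following''; this is a correct and somewhat more careful write-up of the same argument.
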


\subsection{Extension to maximum of regular functions objective}

The correctness of the above dynamic programming algorithm 
is based on a decomposition principle ensuring that parts of an
optimal schedule are also optimal for the subproblems containing only 
the jobs of those parts. This in turn implies that an optimal 
subsequence is independent of its starting time. In other words,
consider an optimal schedule $\sigma^*$ and one of its 
subsequences $\sigma^*(i,j)$ starting at, say, time $t$.
Then, the same sequence $\sigma^*(i,j)$ is optimal for the subproblem 
pertaining only jobs $\{i,i+1,\ldots,j\}$ (starting, e.g.\ at time $t=0$).

Unfortunately, when dealing with the minimization of the maximum 
of regular functions of the job completion times, such a property
does not hold anymore. In fact, consider as an example the following 
trivial (sub-)problem with only two jobs $\{1,2\}$, with $p_1 =1$,
$p_2 =4$ and linear regular functions $\phi_1(x) = 0.2x + 15$, 
$\phi_2(x) = 2x$. The optimal schedule starting, e.g.\ at time $t = 0$ 
is $\langle 1, 2\rangle$, while when the starting time is $t=10$, the optimal sequence becomes $\langle 2, 1\rangle$. As a consequence,
no immediate generalization of the efficient dynamic programming approach illustrated Recursions~\eqref{eq:d_recursion} and~\eqref{eq:lambda_at_levelell} may guarantee optimality.

However, problem \maxreg{} is still efficiently solvable by using 
a simple alternative procedure presented hereafter. 
The idea is to perform a binary search over the optimal objective values $\Phi_{\max}$, each time solving a suitable instance of \maxlateness{}.

First observe that we may compute an interval in which the optimal value
of the objective varies:
\begin{align}\label{eq:Phimaxinterval}
    \alpha = \max_{j\in J}\{\phi_j(p_j)\}\le \Phi_{\max}\le \max_{j\in J}\{\phi_j(P(1,j))\} = \omega.
\end{align}

Let $I$ be an instance of \maxreg{} and $\bar\Phi$ be a fixed target value for the optimal objective value of $I$. 
Clearly, $\bar\Phi$ must be chosen in the interval $[\alpha, \omega]$. 
We are asking if there exist solution schedules for our instance of \maxreg{} with optimal objective value not larger than $\bar\Phi$. We refer to such an instance as a YES-instance for $\bar\Phi$.

For all jobs $j\in J$, compute a deadline 
$\tau_j({\bar\Phi}) = \max_{t\ge 0}\{\phi_j(t)\le \bar\Phi\}$.
In general, since $\phi_j(\cdot)$ is non-decreasing, each such deadline may be efficiently computed in time, say, $c$. 
In the worst case, $c$ is $O(\log(\sum_{j\in J} p_j))$,
however it is a reasonable to think of special cases where $c$ is constant.
Then solve an instance of \maxlateness{} with 
due date $d_j = \tau_j({\bar\Phi})$ for $j \in J$. Let $\sigma^*(\bar\Phi)$ be an optimal solution and $L_{\max}(\bar\Phi)$ be the corresponding optimal objective value. If $L_{\max}(\bar\Phi)\le 0$, then $I$ is a YES-instance for $\bar\Phi$ and $\sigma^*(\bar\Phi)$ is a solution of \maxreg{} with objective not larger than $\bar\Phi$. 

Then an optimal solution for our instance $I$ of \maxreg{} can be found by looking for the minimum value of $\bar\Phi$ such that $L_{\max}(\bar\Phi)\le 0$. As we already mentioned, this can be done by performing a binary search over the interval $[\alpha,\omega]$ of \red{Expression}~\eqref{eq:Phimaxinterval}.

The overall complexity of the above procedure depends on the computational costs of the dynamic programming for the \maxlateness{} problem, the deadlines computation, and the binary search. In conclusion, assuming that we are able to efficiently compute the values $\phi_j(\cdot)$ and $\tau_j(\cdot)$  for all $j\in J$, and the values of suitable bounds for the objective $\alpha$ and $\omega$, we may state that:
\begin{thm}
Problem \maxreg{} is polynomially solvable in time 
$O((n\cdot c + n^4)\log(\omega - \alpha))$ 
where $c$ is the cost for computing a job deadline.
\end{thm}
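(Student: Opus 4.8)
The plan is to realize the idea sketched just before the statement: reduce \maxreg{} to a logarithmic number of calls to the \maxlateness{} dynamic program of the previous subsection, via a threshold/feasibility argument combined with binary search. Concretely, I would first prove that the optimal value $\Phi^*$ of any instance $I$ lies in the interval $[\alpha,\omega]$ of Expression~\eqref{eq:Phimaxinterval}; then establish the key equivalence that ``$I$ admits a feasible schedule with $\Phi_{\max}\le\bar\Phi$'' holds if and only if ``the \maxlateness{} instance with due dates $d_j=\tau_j(\bar\Phi)$ has optimum $L_{\max}(\bar\Phi)\le 0$''; and finally show that this predicate is monotone in $\bar\Phi$, which legitimizes searching for the least admissible $\bar\Phi$. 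The complexity bound then follows by bookkeeping.

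For the interval, I would observe that every feasible $\sigma\in\mathcal{F}_S$ satisfies $C_j(\sigma)\ge p_j$ for all $j$ (single machine, no idle time), so monotonicity of $\phi_j$ gives $\Phi_{\max}(\sigma)\ge\max_j\phi_j(p_j)=\alpha$; and since the all-identity move set is feasible, $\ssq\in\mathcal{F}_S$ and $\Phi^*\le\Phi_{\max}(\ssq)=\max_j\phi_j(P(1,j))=\omega$. For the equivalence, I would fix $\bar\Phi\in[\alpha,\omega]$ and use $\tau_j(\bar\Phi)=\max\{t\ge 0:\phi_j(t)\le\bar\Phi\}$, noting it is well defined and $\ge p_j$ because $\phi_j(p_j)\le\alpha\le\bar\Phi$. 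The crucial point is that for every schedule $\sigma$ and every $j$, non-decreasingness of $\phi_j$ yields: $\phi_j(C_j(\sigma))\le\bar\Phi$ iff $C_j(\sigma)\le\tau_j(\bar\Phi)=d_j$ iff $L_j(\sigma)\le 0$; maximizing over $j$ turns a bound on $\Phi_{\max}$ into the condition $L_{\max}\le 0$. Since the feasible set $\mathcal{F}_S$ of Definition~\ref{defi:feasibleschedules} is determined solely by the admissible moves and is therefore identical for both problems, minimizing over $\sigma\in\mathcal{F}_S$ gives $\min\Phi_{\max}\le\bar\Phi$ iff $L_{\max}(\bar\Phi)\le 0$, and the optimal \maxlateness{} schedule $\sigma^*(\bar\Phi)$ witnesses this.

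For the binary search I would prove monotonicity: $\bar\Phi\le\bar\Phi'$ implies $\{t:\phi_j(t)\le\bar\Phi\}\subseteq\{t:\phi_j(t)\le\bar\Phi'\}$, hence $\tau_j(\bar\Phi)\le\tau_j(\bar\Phi')$, and enlarging all due dates cannot increase $L_{\max}$, so $L_{\max}(\bar\Phi)\ge L_{\max}(\bar\Phi')$; thus the answer to ``$L_{\max}(\bar\Phi)\le 0$'' flips at most once over $[\alpha,\omega]$, and the least admissible $\bar\Phi$ is found by binary search over integer thresholds, which suffices since completion times (hence the relevant objective values) are integral. Each of the $O(\log(\omega-\alpha))$ iterations computes $n$ deadlines in $O(n\cdot c)$ time and runs the $O(n^4)$ \maxlateness{} dynamic program, giving the claimed $O((n\cdot c+n^4)\log(\omega-\alpha))$ total; returning $\sigma^*(\bar\Phi)$ for the final $\bar\Phi$ yields an optimal \maxreg{} schedule. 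The main obstacle is essentially the equivalence lemma together with its monotonicity — everything hinges on $\phi_j$ being regular, so that $\Phi_{\max}$-constraints translate cleanly into deadlines, and on $\mathcal{F}_S$ being independent of the due dates; the one technical caveat is to pin down a finite set of candidate thresholds so the search provably terminates, which I would handle through the integrality of completion times (or, for non-integer $\phi_j$, by searching over the $O(n\cdot P(1,n))$ breakpoints $\phi_j(t)$).
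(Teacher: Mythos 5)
Your proposal is correct and follows essentially the same route as the paper: bounding $\Phi_{\max}$ in $[\alpha,\omega]$, converting a target value $\bar\Phi$ into deadlines $\tau_j(\bar\Phi)$, testing feasibility via one call to the \maxlateness{} dynamic program, and binary searching for the smallest admissible $\bar\Phi$. The extra details you supply (monotonicity of the feasibility predicate and termination of the search via integrality/breakpoints) are points the paper leaves implicit, but they do not change the argument.
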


\subsection{Extension to the $\Omega$-constrained problem}  \label{sec:omegalateness}

Similar to Section~\ref{sec:omega_wc}, in order to handle the special case where not 
all jobs are movable, the initialization in Recursion~\eqref{eq:d_recursion} is adjusted as
follows (for a large enough constant $M$):
\red{
$$
g(i,j,1) = 
\left\{
\begin{array}{ll}
M, &  \mbox{ if }  i\in J\setminus\Omega, \ i < j,\\
\mbox{as in Expression~\eqref{eq:d_recursion}}, &  \mbox{ otherwise},
\end{array}\right.
$$
for all $i,j \in J$ with $i \leq j$.}

\section{Minimization of the number of late jobs}\label{sect:latejobs}

This section provides a dynamic programming algorithm to minimize the number of late jobs, $f^{(3)}(\sigma) = \sum_{j \in J} U_j(\sigma)$, where the \red{quantity} $U_j(\sigma)$ assumes value $1$ if and only if job $j$ is late in schedule $\sigma$, i.e.\ if its lateness $L_j(\sigma)$ is strictly positive.

Given a subsequence of a schedule, the number of late jobs in this subsequence depends not only on the processing times and due dates of the jobs in this subsequence, but generally also on its starting time. 
\red{As a consequence, an (optimal)}
rearrangement of a subsequence  that minimizes the number of late jobs depends on \red{its} starting time.
This is illustrated in the following example:  
consider the sequence $\sigma = \langle 1,2,3 \rangle$ with $p_1 = 7$, $p_2 = p_3 = 10$, $d_1 = d_2 = 25$ and $d_3 = 15$.
It is easy to see that, in order to minimize the number of late jobs in the subsequence $\sigma(2, 3)$, 
the ordering $\langle 3,2 \rangle$ should be preferred if the sequence starts at $t=0$ (i.e.\ move $1 \to 3$ is performed), whereas for $t = 7$ (i.e.\ job $1$ is not moved) the arrangement $\langle 2,3 \rangle$ is better. 
However, the following result indicates that 
\red{the relative order of the}
lateness values are independent from the starting time of a subsequence.

\red{\begin{obs} \label{prop:qthLargest}
Let $\sigma_q(i,j)$ be a feasible arrangement of the subsequence $\sigma(i,j)$ that minimizes the $q$-th largest lateness value, $q=1,\ldots,j-i+1$. Then $\sigma_q(i,j)$ 
does not depend on the starting time of $\sigma_q(i,j)$.
\end{obs}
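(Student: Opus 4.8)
The plan is to show that shifting the start time of a subsequence by a constant $\delta$ shifts every job's completion time — and hence every job's lateness — by exactly the same constant $\delta$, regardless of how the subsequence is internally rearranged. This is the key structural fact: within a fixed set of jobs scheduled contiguously with no idle time, the completion time of the job in internal position $h$ of an arrangement $\pi$ starting at time $t$ is $t + \sum_{k\le h} p_{\pi_k}$, so its lateness is $t + \sum_{k\le h} p_{\pi_k} - d_{\pi_k}$. Thus if we compare two start times $t$ and $t'=t+\delta$, the multiset of lateness values of \emph{any} fixed arrangement is just the multiset for start time $t$ with $\delta$ added to every element. Adding a constant to all elements of a multiset is an order-preserving bijection, so it does not change which element is the $q$-th largest, nor does it change the \emph{relative} order of the lateness values among different arrangements.

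Concretely, I would argue as follows. Fix the job set $\{i,i+1,\dots,j\}$ and let $\Pi$ be the (finite) collection of all feasible arrangements of this subsequence reachable by moves up to the relevant level. For an arrangement $\pi\in\Pi$ started at time $t$, write $\ell^{(q)}_\pi(t)$ for its $q$-th largest lateness value. By the observation above, $\ell^{(q)}_\pi(t) = \ell^{(q)}_\pi(0) + t$ for every $\pi$, every $q$, and every $t\ge 0$; this follows because the sorted vector of lateness values of $\pi$ started at $t$ is obtained from the one started at $0$ by adding $t$ componentwise, and sorting commutes with adding a constant to all entries. Consequently, $\arg\min_{\pi\in\Pi}\ell^{(q)}_\pi(t) = \arg\min_{\pi\in\Pi}\bigl(\ell^{(q)}_\pi(0)+t\bigr) = \arg\min_{\pi\in\Pi}\ell^{(q)}_\pi(0)$, which is independent of $t$. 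Hence an arrangement $\sigma_q(i,j)$ minimizing the $q$-th largest lateness can be chosen once and for all, independently of the starting time; this is exactly the claim. (If there are ties in the $\arg\min$, one fixes a deterministic tie-break, e.g.\ lexicographically smallest, which is again start-time independent.)

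The main subtlety — and the only place one has to be slightly careful — is the feasibility qualifier: the set $\Pi$ of arrangements that are reachable by LIFO-feasible moves at levels up to $\ell$ must itself not depend on the starting time. But this is immediate from the definitions in Section~\ref{sec:model}: feasibility of a set of moves is a purely combinatorial condition on the indices and nesting levels (Definitions~\ref{defi:move}--\ref{defi:feasibleschedules}), with no reference to processing times or to when the subsequence begins. So $\Pi$ is fixed, and the argument above goes through verbatim. I would also remark that the same reasoning explains why Observation~\ref{prop:qthLargest} is the right tool for the number-of-late-jobs objective: the count $\sum U_j$ over a subsequence equals the number of lateness values exceeding $0$, which \emph{does} move as the start time changes, but the \emph{identity} of the arrangement achieving the $q$-th largest lateness does not — so one can precompute, for each subsequence, the sorted profile of optimal lateness values and then, at the point of use, simply read off how many of them exceed the (start-time-dependent) threshold.

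I do not expect any real obstacle here; the statement is essentially a restatement of the elementary fact that contiguous no-idle scheduling translates completion times by the start offset, combined with the triviality that sorting and threshold-free minimization are invariant under adding a global constant. The write-up should be short: state the componentwise-shift identity for lateness vectors, note that sorting preserves it, and conclude that the $\arg\min$ over the (start-time-independent) feasible set is start-time-independent.
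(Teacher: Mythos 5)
Your proposal is correct and follows essentially the same reasoning the paper gives: a shift of the starting time translates every lateness value in any fixed arrangement by the same constant, so the sorted order (and hence the minimizer of the $q$-th largest value over the start-time-independent feasible set) is unchanged. The paper's own justification is exactly this shift-invariance remark, so no further comparison is needed.
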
}

Note that in general the arrangements minimizing the $q$-th largest lateness are different for different values of $q$.
However, when a subsequence is moved \red{earlier} (or delayed), all lateness values decrease (or increase) by the same amount and hence their order remains unchanged. \red{Observation~\ref{prop:qthLargest} holds for any possible definition of ``feasible arrangement''; in particular, in this paper we clearly refer to the set of sequences resulting from a set of feasible moves at levels up to a given maximum. 

In the reminder of this section, we refer to {\em $\ell$-rearrangement} to indicate a feasible rearrangement of a subsequence that can be obtained by moves up to level $\ell$.
}

\subsection{Dynamic Programming}\label{sec:dpU}

By exploiting Observation~\ref{prop:qthLargest}, we subsequently devise a strongly polynomial time dynamic program for minimizing the number of late jobs.

In the dynamic program, we store for each subsequence of jobs not only the current number of late jobs, but we also record 
the minimum required reduction of the starting time for this subsequence to reach any 
number of late jobs\red{, as we formally define below:
\begin{defi}\label{def:state}
For each $i,j \in J$ with $i \leq j$, $m=0,\ldots,j-i$ and $\ell = 0,\ldots,S$, let $s(i,j,m,\ell)$ 
be the {\em minimum decrease of the starting time} of the subsequence $\ssq(i,j)$ to obtain $m$ late jobs in it when the subsequence can be rearranged with moves up to level $\ell$.
\end{defi}
}
Hereinafter, we will refer to this information as the {\em state} of the dynamic programming algorithm.
\red{Note that $s(i,j,m,\ell)$ may also be negative (i.e.\ an increase of the starting time) if the number of late jobs in $\ssq(i,j)$ is smaller than $m$.
}
%
%
\red{Definition~\ref{def:state} does not include the case $m = j-i+1$
since the case where {\em all} jobs in a subsequence are late never requires reducing the starting time of the subsequence and thus is not relevant for a LIFO rearrangement.
}
%
%

Due to Observation~\ref{prop:qthLargest}, all $\ell$-rearrangements of a subsequence $\ssq(i,j)$ that allow us to obtain $m$ late jobs in it, when the subsequence is moved \red{earlier} by $s'$, are dominated by the $\ell$-rearrangements of $\ssq(i,j)$ that allow to obtain the same number of $m$ late jobs in it when the subsequence is \red{moved} by $s'' < s'$.
By this dominance relation between rearrangements, we can restrict ourselves to recording the minimum decrease of the starting time\footnote{\red{Clearly, the values $s(i,j,m,\ell)$ are finite, even if they are negative. Since $m<j-i+1$, there always remains at least one on-time job which forbids an arbitrarily large increase of the starting time.}}
and thus avoid the combinatorial explosion of the problem.


For every given number $m$, the value of state $s(i,j,m,\ell)$ is obtained by taking the \red{minimum} 
reduction of the starting time over all possible moves $(i,k,\ell')$
for $k=i,\ldots,j$ at some level $\ell' \leq \ell$ 
\red{that produce (at most) $m$ late jobs in $\ssq(i,j)$. (Note that if $k=i$ we are looking at an identity move, 
while if $k>i$ we are considering all possible moves nested in $i\to k$.)
}

For each candidate $k$ we \red{compute three multi-sets of lateness values, coming from three job subsets:
namely the jobs nested in the move, the moved job and the other jobs.
From the multi-set obtained by the union of these multi-sets, we select the minimum reduction in time that is needed to have $m$ late jobs in $\ssq(i,j)$ for each value of $m$.}
In particular, 
\red{some} 
late jobs \red{can} 
be obtained from subsequence $\ssq(i+1,k)$ after \red{reducing its starting time} by $p_i$;
\red{some} 
late jobs \red{can} 
be obtained from subsequence $\ssq(k+1,j)$ whose starting time has not changed;
\red{and an additional late job is possibly the moved job $i$}
\red{that is delayed} 
by $P(i+1,k)$.

At level $\ell=0$, the values $s(i,j,m,0)$ are evaluated on the initial sequence using the lateness values of the jobs in it.
For each subsequence $\ssq(i,j)$, we use the notation $\max^{(q)}_{k \in \ssq(i,j)} \{L_k(\ssq)\}$ to indicate the $q$-th largest value \red{in the multi-set $\{L_k(\ssq): k \in \ssq(i,j)\}$ that contains} 
the lateness values of the jobs in the subsequence.
To be precise, we refer to the $q$-th entry in the non-increasingly sorted \red{multi-set.} 
With this notation, we initialize
\begin{equation}  \label{equ:init_late}
s(i,j,m,0) = \max_{k \in \ssq(i,j)}{\!\!\!\!\!}^{(m+1)} \{L_k(\ssq)\} \ \ \ \forall\, i,j \in J,\ i \leq j,\ \forall\, m=0,\ldots,j-i. 
\end{equation} 

\red{For levels $\ell\geq 1$,} the recursive extension rule is as follows. 
For each subsequence $\ssq(i,j)$ 
and for each level $\ell=1,\ldots,S$, 
we first consider all possible ways in which this subsequence can be 
feasibly rearranged with moves up to level $\ell$.
\red{For this purpose, 
we compute the multi-set $\S(i,j,k,\ell)$ containing, for each possible move $(i,k,\ell')$ at any level $\ell' \leq \ell$, the $s(\cdot)$ values 
for all numbers $m = 0, \ldots, j-i$ of late jobs:   
}
\begin{equation} \label{equ:rec_multiset}
\S(i,j,k,\ell) = \bigcup_{u = 0}^{k-i-1} \{s(i+1,k,u,\ell-1)-p_i\} \cup \bigcup_{u = 0}^{j-k-1} \{s(k+1,j,u,\ell)\} \cup \{C_k(\ssq)-d_i\}.
\end{equation}
In \eqref{equ:rec_multiset}, 
\red{we distinguish the union of three multi-sets: The first one contains $k-i$ values computed from the non-dominated $(\ell-1)$-rearrangements of the subsequence $\ssq(i+1,k)$ whose jobs, due to the move $i \to k$, start earlier by $p_i$ time units. 
The second multi-set refers to the $s(\cdot)$ values of the jobs from $k+1$ to $j$ whose starting times are not affected by the move. Finally, the third term corresponds to the lateness of job $i$ after  move $i \to k$ is performed.}
Recall that Observation~\ref{prop:qthLargest} guarantees that a $(\ell-1)$-rearrangement 
of subsequence $\ssq(i+1,k)$ producing (at most) $m$ late jobs does not change when the 
subsequence is moved \red{earlier} by $p_i$.

\red{
The $(m+1)$-largest entry 
of the multi-set $\S(i,j,k,\ell)$ represents the 
necessary amount of time by which we need to bring forward $\ssq(i,j)$ after a move $(i,k,\ell')$ with $\ell' \leq \ell$,
in order to have $m$ late jobs.
Thus, the value 
of the new state corresponds to the move 
$i \to k$ such that this decrease of the starting time is minimum and
can be computed as follows:}
\begin{align} \label{equ:rec}
s(i,j,m,\ell) = \min_{k \in \ssq(i,j)} \left\{\max{\!}^{(m+1)} \S(i,j,k,\ell) \right\}.
\end{align}

The optimal solution value, $f^{(3)}(\sigma^*)$ \red{$= \sum_{j \in J} U_j(\sigma^*)$}, is finally given by $\min\{m: s(1,n,m,S) \leq 0\}$.


\red{
To illustrate the nontrivial handling of multi-sets in the dynamic programming recursion, we give the following example.}

\begin{ex}
\red{
Consider a schedule $\sigma_0 = \langle 1, 2, 3, 4 \rangle$ with $4$ jobs. The respective processing times and due dates are given in Table~\ref{tab:exInit}. 
Since the example is intended to illustrate how a state value is computed, we restrict ourselves to the case $S=1$. An analogous procedure is to be used for larger values of $S$.

In order to compute the minimum number of late jobs in the sequence $\ssq$ with moves up to level $1$, we have to calculate all state values $s(i,j,m,\ell)$ with $i,j \in \{1,2,3,4\}$, $i\leq j$, $m= 0,\ldots,j-i$ and $\ell \in \{0,1\}$. 
}
\begin{table}[htbp]
\red{
	\begin{center}
		\begin{tabular}{ccccc}
		    \toprule
			job         &  $1$ &  $2$ &  $3$ &  $4$  \\  \midrule
			$p_i$       & $25$ & $10$ &  $5$ & $10$  \\
			$d_i$       & $45$ & $15$ & $10$ & $30$  \\  \midrule
			$C_i(\ssq)$ & $25$ & $35$ & $40$ & $50$  \\
			$L_i(\ssq)$ &$-20$ & $20$ & $30$ & $20$  \\
			\bottomrule
		\end{tabular}
	\caption{\red{Processing times, due dates, completion times and lateness values of an initial schedule $\sigma_0 = \langle 1, 2, 3, 4 \rangle$ with $3$ late jobs.}}
	\label{tab:exInit}
	\end{center}
}
\end{table}

\red{
At level $\ell = 0$, the initialization according to \eqref{equ:init_late} needs to be performed. 
Since the lateness value of each job in $\ssq$ is already given in Table~\ref{tab:exInit}, the $s(\cdot)$ values can easily be obtained. For example, $s(1,4,2,0)$ corresponds to the third largest value in the multi-set $\{30, 20, 20, -20\}$ which is $20$. All state values at level $0$ are presented in Table~\ref{tab:exState0}. 
}

\begin{table}[htb]
	\red{
		\begin{center}
			\renewcommand{\arraystretch}{1.2}
			\begin{tabular}{cc|c|cccc}
				\toprule
				\multicolumn{3}{c|}{} & \multicolumn{4}{c}{$m$} \\
				\multicolumn{3}{c|}{} & $0$ & $1$ & $2$ & $3$ \\ \midrule\midrule
				$j=1$ & 
				  $i=1$ &  $s(1,1,m,0)$  & $-20$  &        &        &         \\ \midrule\midrule
				\multirow{2}{*}{$j=2$} & 
				  $i=2$ &  $s(2,2,m,0)$  &  $20$  &        &        &         \\ \cline{2-7}
				& $i=1$ &  $s(1,2,m,0)$  &  $20$  & $-20$  &        &         \\ \midrule\midrule
				\multirow{3}{*}{$j=3$} & 
				  $i=3$ &  $s(3,3,m,0)$  &  $30$  &        &        &         \\ \cline{2-7}
				& $i=2$ &  $s(2,3,m,0)$  &  $30$  &  $20$  &        &         \\ \cline{2-7}
				& $i=1$ &  $s(1,3,m,0)$  &  $30$  &  $20$  & $-20$  &         \\ \midrule\midrule
				\multirow{3}{*}{$j=4$} & 
				  $i=4$ &  $s(4,4,m,0)$  &  $20$  &        &        &         \\ \cline{2-7}
				& $i=3$ &  $s(3,4,m,0)$  &  $30$  &  $20$  &        &         \\ \cline{2-7}
				& $i=2$ &  $s(2,4,m,0)$  &  $30$  &  $20$  &  $20$  &         \\ \cline{2-7}
				& $i=1$ &  $s(1,4,m,0)$  &  $30$  &  $20$  &  $20$  &  $-20$  \\
				
				\bottomrule
			\end{tabular}
			\caption{\red{State values $s(i,j, m ,0)$ for all $i,j \in \{1,2,3,4\}$ with $i\leq j$ and $m \in \{0,\ldots, j-i\}$.}}
			\label{tab:exState0}
		\end{center}
}	
\end{table}

\red{
At level $\ell = 1$, the subsequences are examined in a specific order as shown by the rows in Table~\ref{tab:1-exState1}: in an outer loop, index $j$ spans the range $1$ to $4$ while in an inner loop index $i$ runs from $j$ down to $1$.
In order to compute $s(i,j,m,1)$ with \eqref{equ:rec}, we first have to compute $\S(i,j,k,1)$ for all $k \in \ssq(i,j)$ according to \eqref{equ:rec_multiset}.

As $\S(i,j,k,1)$ is the union of three multi-sets, each of the three sets is given separately in the middle part of Table~\ref{tab:1-exState1}: 
On the left, indicated by $s(i+1,k,u,0)-p_i$ from \eqref{equ:rec_multiset},
the lateness values of the jobs in the subsequence, which are moved to an earlier starting time, are shown.
In the middle the lateness values $s(k+1,j,u,1)$ of the jobs in the later part of the subsequence are given. 
These values come from the previous rows in the table. 
For this reason the $s(\cdot)$ values need to be evaluated in the order stated above.
On the right the lateness value $C_k{(\ssq)} - d_i$ corresponding to the moved job is given.
}

\begin{table}[htbp]
\red{
	\begin{center}
	\renewcommand{\arraystretch}{1.2}
		\begin{tabular}{cc|c|ccc|cccc}
			\toprule
			\multicolumn{2}{c|}{} & \multirow{2}{*}{move $i\rightarrow k$} & $s(i+1,k,u,0)-p_i$    & $s(k+1,j,u,1)$     &  \multirow{2}{*}{$C_k(\ssq)-d_i$} &\multicolumn{4}{c}{$m$} \\
			 \multicolumn{2}{c|}{}  &  &  $u=0,\ldots,k-i-1$        &  $u=0,\ldots,j-k-1$    &      & $0$ & $1$ & $2$ & $3$ \\
			
			\midrule\midrule
			\multirow{2}{*}{\STAB{\rotatebox[origin=c]{90}{$j=1$}}}& \multirow{2}{*}{\STAB{\rotatebox[origin=c]{90}{$i=1$}}} 
			  & $k = 1$  & & & $-20$ & $-20$ & & & \\  \cline{3-10}
			& & $\mathbf{s(1,1,m,1)}$ & & & & $\mathbf{-20}$ & & & \\
			
            \midrule\midrule
			\multirow{5}{*}{\STAB{\rotatebox[origin=c]{90}{$j=2$}}}& \multirow{2}{*}{\STAB{\rotatebox[origin=c]{90}{$i=2$}}} 
			& $k = 2$  & & & $20$   & $20$ &  & &  \\  \cline{3-10}
			& & $\mathbf{s(2,2,m,1)}$ & & & & $\mathbf{20}$ & & &\\	\cmidrule{2-10}
			
			& \multirow{3}{*}{\STAB{\rotatebox[origin=c]{90}{$i=1$}}}  
			  & $k = 1$  &      & $20$ & $-20$  & $20$ & $-20$ & & \\
			& & $k = 2$  & $-5$ &      & $-10$  & $-5$ & $-10$ & & \\  \cline{3-10}
			& & $\mathbf{s(1,2,m,1)}$ & & & & $\mathbf{-5}$ & $\mathbf{-20}$ & & \\

			\midrule\midrule
			\multirow{9}{*}{\STAB{\rotatebox[origin=c]{90}{$j=3$}}} & \multirow{2}{*}{\STAB{\rotatebox[origin=c]{90}{$i=3$}}}
			  & $k = 3$  & & & $30$   & $30$ & & & \\  \cline{3-10}
			& & $\mathbf{s(3,3,m,1)}$ & & & & $\mathbf{30}$ & & & \\  \cmidrule{2-10}
			
			& \multirow{3}{*}{\STAB{\rotatebox[origin=c]{90}{$i=2$}}}  
			  & $k = 2$  &      &  $30$ & $20$   & $30$ & $20$ & & \\
			& & $k = 3$  & $20$ &       & $25$   & $25$ & $20$ & & \\  \cline{3-10}
			& & $\mathbf{s(2,3,m,1)}$ & & & & $\mathbf{25}$ & $\mathbf{20}$ & & \\  \cmidrule{2-10}
			
			& \multirow{4}{*}{\STAB{\rotatebox[origin=c]{90}{$i=1$}}} 
			  & $k = 1$  &           & $25$, $20$ & $-20$   & $25$ & $20$ &  $-20$ & \\ 
			& & $k = 2$  & $-5$      & $30$       & $-10$   & $30$ & $-5$ &  $-10$ & \\
			& & $k = 3$  & $5$, $-5$ &            & $-5$    &  $5$ & $-5$ &  $-5$  & \\  \cline{3-10}
			& & $\mathbf{s(1,3,m,1)}$ & & & & $\mathbf{5}$ & $\mathbf{-5}$ & $\mathbf{-20}$ & \\

			\midrule\midrule
			\multirow{14}{*}{\STAB{\rotatebox[origin=c]{90}{$j=4$}}} & \multirow{2}{*}{\STAB{\rotatebox[origin=c]{90}{$i=4$}}} 
			  & $k = 4$  &  & & $20$ & $20$ & & &  \\ \cline{3-10}
			& & $\mathbf{s(4,4,m,1)}$ & & & & $\mathbf{20}$ & & &  \\  \cmidrule{2-10}
			
			& \multirow{3}{*}{\STAB{\rotatebox[origin=c]{90}{$i=3$}}}  
			  & $k = 3$  &      & $20$ & $30$   & $30$ & $20$ & &  \\
			& & $k = 4$  & $15$ &      & $40$   & $40$ & $15$ & &  \\  \cline{3-10}
			& & $\mathbf{s(3,4,m,1)}$ & & & & $\mathbf{30}$ & $\mathbf{15}$ & & \\  \cmidrule{2-10}
			
			& \multirow{4}{*}{\STAB{\rotatebox[origin=c]{90}{$i=2$}}} 
			  & $k = 2$  &            & $30$, $15$ & $20$   & $30$ & $20$ &  $15$ & \\ 
			& & $k = 3$  & $20$       & $20$       & $25$   & $25$ & $20$ &  $20$ & \\
			& & $k = 4$  & $20$, $10$ &            & $35$   & $35$ & $20$ &  $10$ & \\  \cline{3-10}
			& & $\mathbf{s(2,4,m,1)}$ & & & & $\mathbf{25}$ & $\mathbf{20}$ & $\mathbf{10}$ & \\  \cmidrule{2-10}
			
			& \multirow{4}{*}{\STAB{\rotatebox[origin=c]{90}{$i=1$}}} 
			  & $k = 1$  &                 & $25$, $20$, $10$ & $-20$   & $25$ &  $20$ & $10$ & $-20$ \\ 
			& & $k = 2$  & $-5$            & $30$, $15$       & $-10$   & $30$ &  $15$ & $-5$ & $-10$ \\
			& & $k = 3$  & $5$, $-5$       & $20$             & $-5$    & $20$ &   $5$ & $-5$ &  $-5$ \\
			& & $k = 4$  & $5$, $-5$, $-5$ &                  & $5$     &  $5$ &   $5$ &  $-5$ & $-5$ \\  \cline{3-10}
			& & $\mathbf{s(1,4,m,1)}$ & & & & $\mathbf{5}$ & $\mathbf{5}$ & $\mathbf{-5}$ & $\mathbf{-20}$\\
			\bottomrule
		\end{tabular}
		\caption{\red{State values $s(i,j, m ,1)$ for all $i,j \in \{1,2,3,4\}$ with $i\leq j$ and $m \in \{0,\ldots, j-i\}$.}}
		\label{tab:1-exState1}
	\end{center}
}
\end{table}

\red{
The detailed description of the computation of a selected multi-set follows on the basis of Figure~\ref{fig:exMove}:
As illustrated,
when move $1 \rightarrow 2$ at level $\ell=1$ is performed in the sequence $\ssq(1,4)$, 
the values of the multi-set $\S(1,4,2,1)$ are computed according to \eqref{equ:rec_multiset} as follows:
the first multi-set contains the lateness value of job $2$ when it is moved forward by $p_1$ due to the movement of job $1$.
In this example $\bigcup_{u=0}^{0} \{s(2,2,u,0) - p_1 \} = \{-5\}$.
The second multi-set contains the lateness values of jobs $3$ and $4$ when they are optimally rearranged by moves up to level $l=1$.
Therefore, $s(3, 4, 0, 1) = 30$ (obtained by the identical move $3 \rightarrow 3$) and $s(3, 4, 1 ,1) = 15$ (obtained by move $3 \rightarrow 4$).
The third term in \eqref{equ:rec_multiset} is the lateness value of the moved job $1$ in its new position which is $C_2(\sigma_0) - d_1 = 35 - 45 = -10$.
Thus, \eqref{equ:rec_multiset} defines the multi-set $\S(1,4,2,1) = \left\{30, 15, -5, -10 \right\}$ which is also presented in Table~\ref{tab:1-exState1} together with all other multi-sets $\S(\cdot)$.
}

\begin{figure}[htbp]
\begin{center}
\begin{tikzpicture}[scale=0.15,>=triangle 45]
\node at (0,38) {\footnotesize $0$};

\draw (0,40) rectangle (25,44);
\node at (12.5,42) {$25$};
\node at (12.5,46) {$\mathbf{1}\ [45]$};
\node at (12.5,36) {$(-20)$};
\node at (25,38) {\footnotesize $25$};

\draw (25,40) rectangle (35,44);
\node at (30,42) {$10$};
\node at (30,46) {$\mathbf{2}\ [15]$};
\node at (30,36) {$(20)$};
\node at (35,38) {\footnotesize $35$};

\draw (35,40) rectangle (40,44);
\node at (37.5,42) {$5$};
\node at (37.5,46) {$\mathbf{3}\ [10]$};
\node at (37.5,36) {$(30)$};
\node at (40,38) {\footnotesize $40$};

\draw (40,40) rectangle (50,44);
\node at (45,42) {$10$};
\node at (45,46) {$\mathbf{4}\ [30]$};
\node at (45,36) {$(20)$};
\node at (50,38) {\footnotesize $50$};

\node at (0,8) {\footnotesize $0$};

\draw (0,10) rectangle (10,14);
\node at (5,12) {$10$};
\node at (5,16) {$\mathbf{2}\ [15]$};
\node at (5,6) {$(-5)$};
\node at (10,8) {\footnotesize $10$};
\draw (10,10) rectangle (35,14);
\node at (22.5,12) {$25$};
\node at (22.5,16) {$\mathbf{1}\ [45]$};
\node at (22.5,6) {$(-10)$};
\node at (35,8) {\footnotesize $35$};

\node at (35,18) {\footnotesize $35$};
\draw (35,20) rectangle (40,24);
\node at (37.5,22) {$5$};
\node at (37.5,26) {$\mathbf{3}\ [10]$};
\node at (37.5,16) {$(30)$};
\node at (40,18) {\footnotesize $40$};
\draw (40,20) rectangle (50,24);
\node at (45,22) {$10$};
\node at (45,26) {$\mathbf{4}\ [30]$};
\node at (45,16) {$(20)$};
\node at (50,18) {\footnotesize $50$};

\node at (35,-2) {\footnotesize $35$};
\draw (35,0) rectangle (45,4);
\node at (40, 2) {$10$};
\node at (40,6) {$\mathbf{4}\ [30]$};
\node at (40,-4) {$(15)$};
\node at (45, -2) {\footnotesize $45$};
\draw (45, 0) rectangle (50,4);
\node at (47.5,2) {$5$};
\node at (47.5,6) {$\mathbf{3}\ [10]$};
\node at (47.5,-4) {$(40)$};
\node at (50,-2) {\footnotesize $50$};


\draw [->,line width = 1pt] (12.5,34) -- (12.5,25) -- (22.5,25) -- (22.5,18);
\node at (17.5, 29) {\textbf{move}};
\node at (17.5, 27) {$\mathbf{1\boldsymbol\rightarrow 2}$};

\draw [decorate,decoration={brace,amplitude=5pt,mirror,raise=4ex}]
(2.5, 7) -- (7.5, 7) node[midway,yshift=-3em]{$s(2,2,0,0) - p_1$};

\draw [decorate,decoration={brace,amplitude=5pt,mirror,raise=4ex}]
(20, 7) -- (25, 7) node[midway,yshift=-3em]{$C_2(\sigma_0) - d_1$};

\draw [decorate,decoration={brace,amplitude=5pt}]
(53,24) -- (53,0) node [midway,right,xshift=.3cm,yshift=1em] {$\bigcup_{u=0}^1 \{s(3,4,u,1)\}$};
\node at (64, 11) {$ = \{ 30, 15 \}$};

\end{tikzpicture}
\caption{
\red{Illustration of computing the multi-set $\S(1,4,2,1)$ in the given example where processing times are indicated within the rectangles representing the jobs, due dates are in square brackets and lateness values are in parentheses.
}
}
\label{fig:exMove}
\end{center}
\end{figure}
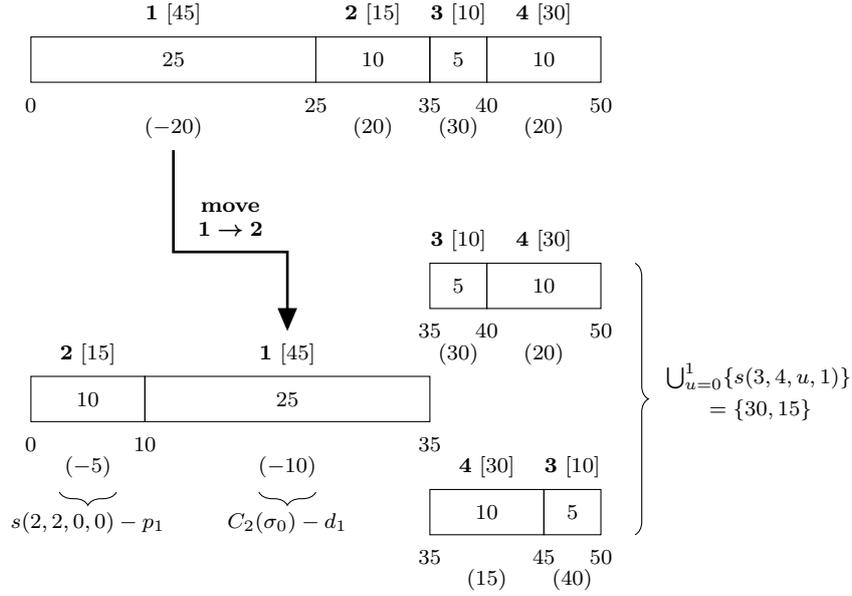

\red{
After calculating $\S(i,j,k,1)$ for all $k \in \ssq(i,j)$, the state values $s(i,j,m,1)$ can be obtained from \eqref{equ:rec}. For this purpose, the values in the multi-set $\S(i,j,k,1)$ are sorted in non-increasing order, as shown in the right part of Table~\ref{tab:1-exState1}. The value $s(i,j,m,1)$ is then obtained by talking the minimum value of all $(m+1)$-largest values from the multi-sets $\S(i,j,k,1)$. Using the table, this can be performed by calculating the minimum value of the corresponding $j-i+1$ rows in the column with the corresponding $m$-value.
For example, $s(1,4,2,1)$ corresponds to the minimum of the multi-set $\{10, -5, -5, -5\}$ which is $-5$.

}

\red{
As we can see in the last row of Table~\ref{tab:1-exState1}, the optimal solution for stack capacity $S=1$ has two late jobs since $m=2$ is the minimum value which fulfills $s(1,4,m,1)\leq 0$,
i.e.\ a schedule with two late jobs can be obtained at level $\ell=1$ starting at time $0$.
In this example multiple optimal schedules exist.
One of them, for instance, is the schedule
$\sigma = \langle 2, 1, 3, 4 \rangle$
in which jobs $3$ and $4$ are late. It can be obtained by the single move $1 \rightarrow 2$.

}

\end{ex}

\subsubsection{Algorithmic realization of the Dynamic Program}  \label{sec:algplain}

\red{In this section we illustrate a straightforward implementation of the algorithm sketched in Section~\ref{sec:dpU}.
The corresponding pseudocode 
is presented in Algorithm~\ref{alg:C3_plain}. }

The recursion requires the generation of the states from level $\ell=0$ up to $S$
to ensure that the effects of moves nested in any move of level $\ell$ have already been examined.
Furthermore, for each level, the subsequences are examined in a given order: 
in an outer loop \red{(Steps~\ref{step8}--\ref{step17})}, the index $j$ spans the range $1$ to $n$ 
while in an inner loop the index $i$ runs from $j$ down to $1$ \red{(Steps~\ref{step9}--\ref{step17})}.
This guarantees that when move $(i,k,\ell)$ 
is considered, the effects of all moves $(i',k',\ell)$ with $k<i' \leq k'$, have already been computed. 
\red{For a given triple $(i, j, \ell)$, all multi-sets $\S(i,j,k,\ell)$, $k=i,\ldots,j$, are temporarily stored in the lists $\S_k$.}
Note that \red{Steps~\ref{step12}--\ref{step14} implement the union of  the three multi-sets of Equation~\eqref{equ:rec_multiset}} and the procedure {\em Append}, used in the algorithm, simply adds an element to the end of a linked list. 

\begin{algorithm}[htbp]
	\begin{algorithmic}[1]
		\Procedure{Initialization\ }{for $\ell = 0$}	\label{step1}
		\For{$i=1,\ldots,n$}                            \label{step2}
		\For{$j=i,\ldots,n$}                            \label{step3}
		\For{$m=0,\ldots,j-i$}                          \label{step4}
		\State $s(i,j,m,0) \leftarrow$ $(m+1)$-largest value of a list with values $L_i(\ssq), \ldots, L_j(\ssq)$                                      \label{step5}
		\EndFor		
		\EndFor
		\EndFor
		\EndProcedure
		\Procedure{Recursion\ }{for $\ell \geq 1$}       \label{step6}
			\For{$\ell=1,\ldots,S$}                      \label{step7}
				\For{$j=1,\ldots,n$}                     \label{step8}
					\For{$i=j,\ldots,1$}                 \label{step9}
						\For{$k=i,\ldots,j$}    \Comment{Loop $1$}   \label{step10}
							\State Let $\S_k$ be an empty list         \label{step11}
							\For{$u = 0,\ldots, k-i-1$}               
								 {\em Append} $s(i+1,k,u,\ell-1) - p_i$ to the list $\S_k$  \label{step12}
							\EndFor
							\For{$u=0,\ldots, j-k-1$}                    
								{\em Append} $s(k+1,j,u,\ell)$ to the list $\S_k$ \label{step13}
							\EndFor
							\State {\em Append} $C_k(\ssq) - d_i$ to the list $\S_k$      \label{step14}
						\EndFor
						\For{$m=0,\ldots,j-i$}    \Comment{Loop $2$}  \label{step15}
						\State Let $\S$ be a list with the $(m+1)$-largest values of $\S_i, \ldots, \S_j$                                            \label{step16}
							\State $s(i,j,m,\ell) \leftarrow \min(\S)$   \label{step17}
						\EndFor
					\EndFor
				\EndFor
			\EndFor		
	\EndProcedure
	\State $f^{(3)}(\sigma^*) \leftarrow \min\{m: s(1,n,m,S) \leq 0\}$	 \label{step20}
    \end{algorithmic}
    \caption{Dynamic Program for minimizing the number of late jobs}
    \label{alg:C3_plain}
\end{algorithm}

The running time complexity of Algorithm~\ref{alg:C3_plain} can be analyzed as follows.
The main computation, executed in Loop $1$ (\red{Steps~\ref{step10}--\ref{step14}}) and Loop $2$ \red{(Steps~\ref{step15}--\ref{step17})} of procedure {\sc Recursion}, is performed $O(S n^2)$ times, 
namely for all levels $\ell$ and over all ordered pairs $(i,j)$.
Both, Loop $1$ and Loop $2$, \red{have total cost $O(n^2)$ since each iteration in those loops requires $O(n)$ time.}
Thus, the overall worst-case time complexity is $O(n^4 S)$ which can be bounded by $O(n^5)$.
The above discussion can be summed up in the following theorem. 
\begin{thm}\label{th:latejobs}
Algorithm~\ref{alg:C3_plain} determines an optimal solution for problem \numberlatejobs{} in $O(n^5)$ time.
\end{thm}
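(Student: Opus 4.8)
The plan is to prove the theorem in two parts: correctness of the recursion \eqref{equ:init_late}--\eqref{equ:rec} (hence of Algorithm~\ref{alg:C3_plain}) and the $O(n^5)$ time bound. For correctness I would show, by induction on the level $\ell$ and --- within a fixed level --- in the order in which Algorithm~\ref{alg:C3_plain} generates its states, that every value $s(i,j,m,\ell)$ it computes equals the quantity of Definition~\ref{def:state}. The base case $\ell=0$ is immediate: a $0$-rearrangement of $\ssq(i,j)$ is $\ssq(i,j)$ itself, and after decreasing its starting time by $s$ the number of jobs with positive lateness equals $m$ exactly once $s$ reaches the $(m+1)$-largest lateness value in the subsequence, which is what \eqref{equ:init_late} records.

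For the inductive step the crux is a decomposition of an arbitrary $\ell$-rearrangement of $\ssq(i,j)$ according to the (possibly identity) move $i\to k$ whose source is job $i$. By Definitions~\ref{defi:move}--\ref{defi:level} and the sequential/nested dichotomy of feasible moves, no feasible move can straddle position $k$ --- a move $a\to b$ with $i<a\le k<b$ would be neither sequential with nor nested in $i\to k$ --- so the remaining moves split into those nested in $i\to k$, which act only on $\ssq(i+1,k)$ and form an $(\ell-1)$-rearrangement of it, and those sequential after $i\to k$, which act only on $\ssq(k+1,j)$ and form an $\ell$-rearrangement of it. Consequently, with $\ssq(i,j)$ at its nominal start time the multi-set of its lateness values is the disjoint union of: the lateness values of $\ssq(i+1,k)$ under its $(\ell-1)$-rearrangement, each decreased by $p_i$ (job $i$ having vacated its slot); the unchanged lateness values of $\ssq(k+1,j)$ under its $\ell$-rearrangement; and the single value $C_k(\ssq)-d_i$ of the displaced job $i$. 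Decreasing the start time of $\ssq(i,j)$ shifts this entire union uniformly, so by Observation~\ref{prop:qthLargest} the optimal orderings of the two sub-sequences (minimizing their respective order statistics) are unaffected, and for a fixed first move $i\to k$ the minimum decrease leaving exactly $m$ jobs late is the $(m+1)$-largest element of the pool in which the first part contributes, for each count $u$, its optimal $s(i+1,k,u,\ell-1)-p_i$ and the last part its optimal $s(k+1,j,u,\ell)$. This pool is precisely $\S(i,j,k,\ell)$ of \eqref{equ:rec_multiset}, and minimizing over $k$ as in \eqref{equ:rec} gives $s(i,j,m,\ell)$; sweeping $\ell$ upward and, within a level, $j$ upward and $i$ downward guarantees that all needed values $s(i+1,k,\cdot,\ell-1)$ and $s(k+1,j,\cdot,\ell)$ are already available. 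Since a $0$-decrease of $\ssq(1,n)$ is the actual schedule started at time $0$, Step~\ref{step20} returns $\min\{m:s(1,n,m,S)\le 0\}$, the minimum number of late jobs over all feasible schedules in $\mathcal F_S$.

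For the running time, initialization over all $O(n^2)$ pairs $(i,j)$ costs $O(n^2\log n)$ (one sort of at most $n$ lateness values per pair), which is dominated. In {\sc Recursion} the body indexed by $(\ell,i,j)$ runs $O(Sn^2)$ times; using that $s(i,j,\cdot,\ell)$ is non-increasing in its third argument, each recursive contribution list in \eqref{equ:rec_multiset} is already sorted, so every $\S_k$ is built as a merge of two sorted lists plus one element in $O(n)$ time, making Loop $1$ cost $O(n^2)$, and Loop $2$ --- which for each of the $O(n)$ values of $m$ reads the $(m+1)$-th entry of the $O(n)$ sorted lists and takes a minimum --- also $O(n^2)$. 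Hence the recursion costs $O(Sn^4)$, and with $S\le n$ this is $O(n^5)$.

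I expect the structural decomposition lemma to be the main obstacle: specifically, establishing that no feasible move crosses the position-$k$ boundary, and that tracking in each subproblem only the minimum starting-time decrease per target count $m$ --- rather than all attainable arrangements --- loses nothing. Once this dominance (a consequence of Observation~\ref{prop:qthLargest}) is pinned down, verifying that \eqref{equ:rec_multiset}--\eqref{equ:rec} assemble exactly the right candidate pool and carrying out the complexity count are routine.
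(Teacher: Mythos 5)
Your proof is correct and follows essentially the same route as the paper's (informal) argument: the same decomposition of an $\ell$-rearrangement by the (possibly identity) move $i\to k$ into an $(\ell-1)$-rearrangement of $\ssq(i+1,k)$ and an $\ell$-rearrangement of $\ssq(k+1,j)$, the same dominance via Observation~\ref{prop:qthLargest} justifying that only the minimum start-time decrease per count $m$ need be stored, the same multi-set recursion, and the same $O(n^4 S)\le O(n^5)$ count; your explicit check that no feasible move can straddle position $k$ and that the lists $\S_k$ stay sorted (since $s(i,j,\cdot,\ell)$ is non-increasing in $m$) are welcome sharpenings of the paper's discussion. The only slip is arithmetic and harmless: initializing all $O(n^2)$ pairs with one sort of up to $n$ lateness values each costs $O(n^3\log n)$, not $O(n^2\log n)$, which is still dominated by the recursion.
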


\subsection{Extension to the $\Omega$-constrained problem}\label{sec:omega_tardyjob}

To deal with the problem where not all jobs are movable, \red{we need to reconsider Expressions~\eqref{equ:init_late} and \eqref{equ:rec} since the multi-sets $\S(i,j,k,\ell)$ have no meaning anymore for $i \in J\setminus\Omega$. In particular, 
Recursion~\eqref{equ:rec} can be rewritten as follows:
$$
s(i,j,m,\ell) = 
\left\{
\begin{array}{ll}
\max{\!}^{(m+1)} \left( \bigcup_{u = 0}^{j-i-1} \{s(i+1,j,u,\ell)\} \cup \{C_i(\ssq)-d_i\}\right),& \mbox{ if }  i\in J\setminus\Omega, \\
\mbox{as in Expression~\eqref{equ:rec}}, & \mbox{ if }  i\in\Omega,
\end{array}\right.
\quad
$$
for all $i, j \in J$, $i\leq j$, $m = 0,\ldots,j-i$, and $\ell = 1,\ldots,S$.}
Since job $i$ cannot be moved, its lateness value is $C_i(\ssq) - d_i$, whereas the necessary reductions of the starting times for the subsequence $\ssq(i+1,j)$ are the already computed values of the states $s(i+1,j,u,\ell)$, $u=0,\ldots,j-i-1$.

\section{Minimization of the weighted number of late jobs}\label{sec:weightedlatejobs}

Here we deal with a generalization of the problem introduced in the previous Section~\ref{sect:latejobs}, and consider different job priorities. 
We therefore introduce a cost, associated to each job, which is paid if that job is late. 
This type of objective 
is studied in a number of papers (see, e.g.~\citealt{bib:b1999,bib:bpsw2005}).
We first prove that the resulting problem, namely \wnumberlatejobs, is NP-hard and then present a pseudo-polynomial 
exact solution algorithm based on dynamic programming.

\begin{thm}\label{thm:wlate}
Problem \wnumberlatejobs{} is binary NP-hard.
\end{thm}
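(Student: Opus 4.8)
The plan is to prove binary NP-hardness of \wnumberlatejobs{} by a reduction from the classical \textsc{Partition} problem: given positive integers $a_1,\ldots,a_n$ with $\sum_{k} a_k = 2A$, decide whether some subset sums to exactly $A$. The natural idea is to build an instance of \wnumberlatejobs{} whose initial sequence $\ssq$ is arranged so that the only jobs worth moving are a collection of ``item'' jobs, and the LIFO-stack mechanism combined with a carefully placed ``blocker'' job forces the set of moved items to correspond to a subset of the $a_k$'s; the due dates and weights are then chosen so that the weighted number of late jobs is minimized precisely when that subset has total size $A$.

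\textbf{Construction.} First I would introduce one item job for each index $k=1,\ldots,n$ with processing time $p_k = a_k$, placed consecutively in the early part of $\ssq$, each with a small weight (say $w_k = a_k$ or simply $w_k$ large enough to matter) and a tight due date so that every item job is late in $\ssq$ unless it is moved forward past the blocker. Then I would insert a single ``blocker'' (or ``target'') job $b$ with a large processing time after the item jobs, followed by a ``victim'' job $v$ with huge weight whose due date equals $A$ plus the appropriate offset, so that $v$ is on time exactly when the total processing time of the items that were \emph{not} moved past it equals at most $A$ — equivalently, the items moved \emph{out} from before $v$ sum to at least $A$. Symmetrically, the item jobs that are moved incur their own lateness penalties unless they can be reinserted early enough, which — because moving an item past the blocker delays it by the blocker's large processing time — makes it beneficial to move as few items as possible, i.e.\ items summing to at most $A$. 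Balancing the two pressures (move at least $A$, move at most $A$) forces an exact-$A$ subset, so the optimal weighted number of late jobs drops below a threshold $T$ if and only if the \textsc{Partition} instance is a YES-instance. The stack capacity $S$ can be set to $n$ (or even to a small constant if the construction is arranged with sequential rather than nested moves), so the reduction respects Definition~\ref{defi:feasibleschedules}.

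\textbf{Correctness.} I would then argue both directions. For the forward direction, a partition subset $B$ with $\sum_{k\in B} a_k = A$ yields a feasible schedule: move exactly the item jobs in $B$ past the blocker (as sequential moves, reinserting them in a block immediately after $b$ — recall the assumption that there is always room to place the whole stack content between two jobs), which makes $v$ on time and keeps the moved/unmoved item penalties at exactly the threshold value. For the reverse direction, I would show that any schedule achieving weighted number of late jobs below $T$ must keep $v$ on time (its weight dominates), which forces the moved items to sum to at least $A$; and simultaneously the accumulated penalties of the delayed/late items force the moved items to sum to at most $A$; hence exactly $A$, giving a partition. Here I would use Observation~\ref{prop:TWCT_noEffects} and the structure of feasible moves from Definitions~\ref{defi:move}--\ref{defi:feasibleschedules} to confine the relevant moves to item-job moves only.

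\textbf{Main obstacle.} The delicate part is the reverse direction: ensuring that \emph{no clever alternative use of the LIFO moves} — e.g.\ nesting item moves inside one another, moving the blocker itself, or moving $v$ — can beat the intended threshold without corresponding to an exact partition. I expect the bulk of the technical work to be a careful case analysis showing that moving the blocker or $v$ is never profitable (because of their large processing times / weights), and that nested item moves give no advantage over the block of sequential moves, so that the only degree of freedom left is exactly the choice of which items to move, i.e.\ the subset-sum decision. Getting the weights and due dates to make the threshold $T$ clean (typically $T$ equal to the total weight of all item jobs minus the weight of an $A$-subset, or similar) while keeping all data polynomially bounded in the input — so that the reduction is genuinely a \emph{binary} (weakly) NP-hardness proof rather than a unary one — is the other point requiring care.
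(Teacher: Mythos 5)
Your overall strategy -- a reduction from \textsc{Partition} exploiting weights equal to processing times -- is in the right family, but as written the proposal has two concrete problems. First, the construction is internally inconsistent with the model: you say each item job has ``a tight due date so that every item job is late in $\ssq$ unless it is moved forward past the blocker,'' but in this setting a move can only \emph{postpone} a job (Definition~\ref{defi:move}), so moving an item past the blocker can only increase its completion time; a job that is late in $\ssq$ and is then moved stays late. Later in the same paragraph you treat moved items as the ones incurring lateness penalties, which is the opposite convention. Second, everything that actually decides correctness -- the item weights, the due-date offsets, the threshold $T$, and the entire reverse direction ruling out clever uses of nested moves or moves of $b$ and $v$ -- is left as ``to be determined,'' and you yourself flag this as the bulk of the work. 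A proof proposal that defers exactly the delicate part cannot be checked, so as it stands this is a plan, not a proof.

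The gap is avoidable because the blocker/victim machinery is unnecessary. The paper reduces from \textsc{Equal-Cardinality Partition} and simply sets $p_i = w_i = a_i$ and a \emph{common} due date $d_i = \tfrac12\sum_k a_k$ for every job, with $S = n/2$ and threshold $Q = \tfrac12\sum_k a_k$. With a common due date, the on-time jobs in any schedule occupy at most $\tfrac12\sum_k a_k$ units of processing time, hence (since $w_i = p_i$) have total weight at most $\tfrac12\sum_k a_k$, so the weighted number of late jobs is always at least $Q$, with equality attained if and only if some subset of the $a_i$ sums to exactly $\tfrac12\sum_k a_k$. The LIFO feasibility issue you worry about then disappears in one line: the target schedule is obtained by moving the $n/2$ late jobs to the right end of the sequence, which is a set of nested moves requiring stack capacity exactly $n/2$ (this is why the equal-cardinality variant is used). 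No case analysis about alternative moves is needed, because the lower bound of $Q$ holds for \emph{every} permutation, feasible or not. I would encourage you to look for this kind of permutation-independent lower bound before building gadgets whose correctness you cannot yet certify.
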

\begin{proof}
\red{
We consider the following decision problem $\mathcal{D}$: Given an instance $\bar I$ of \wnumberlatejobs{} and an integer $Q$,
is there a feasible solution $\sigma$ of $\bar I$ such that $\sum_{j} w_j U_j(\sigma) \le Q$?
We prove that $\mathcal{D}$ is NP-complete, thus proving the hardness of \wnumberlatejobs.}

The reduction is from {\sc Equal-Cardinality Partition}: 
Given a set of positive integers $\{a_1, a_2,\ldots,a_n\}$, is there a subset $A'$ of the index set $A = \{1,\ldots, n\}$ with $\sum_{i\in A'}a_i = \sum_{i\in A\setminus A'}a_i$ and $|A'| = \frac n 2$? 
Given an instance $I$ of {\sc Equal-Cardinality Partition}, we can build an instance  \red{$\tilde I$ of $\mathcal{D}$} as follows. 
There are $n$ jobs, and for each job $i=1, \ldots, n$ we set $w_i = p_i = a_i$ and $d_i=\frac 1 2 \sum_{k=1}^n a_k$.
Moreover, the stack capacity is $S=\frac n 2$ and \red{$Q= \frac 1 2 \sum_{i=1}^n a_i$}.

It is easy to observe that \red{ $\tilde I$ is a YES-instance of $\mathcal{D}$, i.e.\ there} is a schedule $\sigma$ such that $\sum_{j} w_j U_j(\sigma) \le \red{Q}$,
if and only if $I$ is a YES-instance of {\sc Equal-Cardinality Partition}. Note that if such a schedule $\sigma$ exists, \red{ then it can be obtained by moving at most $\frac n 2$ jobs (the late ones) to the right end of the schedule.}

\end{proof}

\subsection{Dynamic Programming}\label{sec:firstdp}

\red{In this section we show that a pseudo-polynomial algorithm exists for \wnumberlatejobs, hence 
complementing the result given in Theorem~\ref{thm:wlate}.}

First of all, let us recall that all data (i.e.\ processing times, weights and due dates) are positive integer values.
Unlike the dynamic program for the unweighted case,
where for the optimal $\ell$-rearrangement of each subsequence the minimum reduction of the starting times for different number of late jobs are recorded, we now store the {\em minimum weighted number of late jobs} for different reductions of the starting time. 
%
%
\red{\begin{defi}\label{def:statew}
For each $i,j \in J$ with $i \leq j$, $t=0,\ldots,P(1,i-1)$, and $\ell = 0,\ldots,S$, let $r(i,j,t,\ell)$ 
be the {\em  minimum weighted number of late jobs} that can be achieved by rearranging the subsequence $\ssq(i,j)$
with moves up to level $\ell$, after decreasing the starting time of the subsequence by $t$.
\end{defi}
}
As an initialization we have:
\begin{equation} \label{eq:init_weightedtardy}
r(i,j,t,0) = \sum_{k \in \ssq(i,j):\, L_k(\ssq) > t} w_k \ \ \ \forall\, i,j \in J,\ i \leq j,\ \forall\, t=0,\ldots, P(1,i-1).
\end{equation}
\red{Similar to the algorithm for the unweighted case}, in the recursion, the value of state $r(i,j,t,\ell)$ is obtained by taking the minimum weighted number of late jobs over all possible moves $(i,k,\ell')$ for $k=i,\ldots,j$  at some level $\ell' \leq \ell$.
The effect of such a move is given by the sum of (at most) three terms expressing the weighted lateness of the optimal $(\ell-1)$-rearrangement of subsequence $\ssq(i+1,k)$ with a starting time decreased by $p_i$, the weighted lateness of the optimal $\ell$-rearrangement of subsequence $\ssq(k+1,j)$ and the weighted lateness contributed by the moved job $i$ which is given by
\begin{equation*}
\overline{r}(i,k,t) =
\left\{
\begin{array}{cl}
w_i, & \mbox{ if } t < C_k\left(\ssq\right) - d_i,\\
0,   & \mbox{ otherwise.}
\end{array}
\right.
\end{equation*}
Therefore, the value for the new state can be computed with the following recursion:
\red{
\begin{equation}  \label{eq:rec_weightedtardy}
r(i,j,t,\ell) = \left\{
\begin{array}{ll}
\overline{r}(i,i,t),         & \mbox{ if } i = j, \\ 
\min_{k \in \ssq(i,j)} \{\R(i,k,j,t,\ell)\}, & \mbox{ if } i < j,
\end{array}
\right.
\end{equation}
for all $i,j \in J$, $i\leq j$, for all $t = 0,\ldots, P(1,i-1)$ and for all $\ell = 1,\ldots, S$}
where
\begin{align*} 
\R(i,k,j,t,\ell) = \left\{
\begin{array}{ll}
    r(i+1,j,t,\ell) + \overline{r}(i,i,t), &   \mbox{ if } k = i, \\
    r(i+1,k,t+p_i,\ell-1) + r(k+1,j,t,\ell) + \overline{r}(i,k,t), &  \mbox{ if } i < k < j,\\
    r(i+1,j,t+p_i,\ell-1) + \overline{r}(i,j,t), &  \mbox{ if } k = j.
\end{array}
\right.
\end{align*}

The optimal solution value, $f^{(4)}(\sigma^*)$ \red{$= \sum_{j \in J} w_j U_j(\sigma^*)$}, is found by computing $r(1,n,0,S)$.

\subsubsection{Algorithmic realization of the Dynamic Program} \label{sec:weightalgplain}

\red{The pseudocode in Algorithm~\ref{alg:C4_plain} straightforwardly follows the approach described in the previous section.} 
The recursion requires, analogous to Algorithm~\ref{alg:C3_plain}, the generation of the states from level $\ell=0$ up to $S$. 
Additionally, for each level $\ell$, the subsequences $\ssq(i,j)$ are also examined in two loops where index $j$ runs from $1$ up to $n$ in an outer loop, while in an inner loop index $i$ runs from $j$ down to $1$ to ensure that certain states are already computed.

\begin{algorithm}[htbp]
	\begin{algorithmic}[1]
		\Procedure{Initialization\ }{for $\ell = 0$}	
		\For{$i=1,\ldots,n$} 
		\For{$j=i,\ldots,n$}
		\For{$t=0,\ldots,P(1,i-1)$}
		\State $r(i,j,t,0) \leftarrow$ 0
		\For{$k=i,\ldots,j$ \textbf{with} $ L_k(\ssq) > t$}
		    \State $r(i,j,t,0) \leftarrow r(i,j,t,0) + w_k$
		\EndFor
		\EndFor		
		\EndFor
		\EndFor
		\EndProcedure
		\Procedure{Recursion\ }{for $\ell \geq 1$}
			\For{$\ell=1,\ldots,S$}
				\For{$j=1,\ldots,n$}
					\For{$i=j,\ldots,1$}
					    \For{$t = 0,\ldots,P(1,i-1)$}
					        \State $r_{\min} \leftarrow \infty$
						    \For{$k=i,\ldots,j$}
							    \If{$t < C_k(\ssq) - d_i$}
	                            $\overline{r} \leftarrow w_i$
							    \Else 
	                            { $ \overline{r} \leftarrow 0$}
							    \EndIf
							    
							    \If{$i = j$}
							        $r_k \leftarrow \overline{r}$
							    \ElsIf{$k = i$}
							         $r_k \leftarrow r(i+1,j,t,\ell) + \overline{r}$
							    \ElsIf{$k = j$}
							        $r_k \leftarrow r(i+1,j,t + p_i,\ell-1) + \overline{r}$
							    \Else
							        { $r_k \leftarrow r(i+1,k,t + p_i,\ell-1) + r(k+1,j,t,\ell) + \overline{r}$}
							    \EndIf							    

                                \State $r_{\min} \leftarrow \min\{r_{\min}, r_k\}$
                            \EndFor
                            \State $r(i,j,t,\ell) \leftarrow r_{\min}$
						\EndFor
					\EndFor
				\EndFor
			\EndFor		
	\EndProcedure
	\State $f^{(4)}(\sigma^*) \leftarrow r(1,n,0,S)$	
    \end{algorithmic}
    \caption{Dynamic Program for minimizing the weighted number of late jobs}
    \label{alg:C4_plain}
\end{algorithm}

It is easy to see that the proposed algorithm has a pseudo-polynomial running time of $O(n^3 S\, P)$ where $P = P(1,n) = \sum_{j \in J} p_j$ is the sum of the processing times of all jobs.

Of course, Algorithm~\ref{alg:C4_plain} can also be used for the unweighted case (with $w_j = 1$ for all $j \in J$). 
Clearly, we then still have a pseudo-polynomial running time instead of the strongly polynomial running time stated in Theorem~\ref{th:latejobs}.

\subsection{Extension to the $\Omega$-constrained problem}  \label{sec:omega_weightedtardyjob}

To solve the $\Omega$-constrained problem, we can \red{still use Equations~\eqref{eq:init_weightedtardy} 
as an initialization but we need to adapt Recursion~\eqref{eq:rec_weightedtardy} as follows: 
$$
r(i,j,t,\ell) = 
\left\{
\begin{array}{ll}
r(i+1,j,t,\ell) + \overline{r}(i,i,t), &  \mbox{ if }  i\in J\setminus\Omega, \ i < j,\\
\mbox{as in Expression~\eqref{eq:rec_weightedtardy}}, &  \mbox{ otherwise},
\end{array}\right.
$$
for all $i, j \in J$, $i \leq j$, for all $t = 0,\ldots, P(1,i-1)$ and for all $\ell = 1,\ldots,S$.
If} job $i$ is not movable, its weighted lateness, depending on $t$, is $ \overline{r}(i,i,t)$ and the weighted lateness for the subsequence $\ssq(i+1,j)$ is simply the already computed value $r(i+1,j,t,\ell)$.

\red{ 
\subsection{Alternative Dynamic Program} \label{sec:alternative}

An alternative solution approach for \wnumberlatejobs{} can be derived by generalizing Algorithm~\ref{alg:C3_plain} to the weighted setting.} 
In this case, we define $\widetilde{s}(i,j,m,\ell)$ as a generalization of $s(i,j,m,\ell)$ where $m$ now denotes \red{the weighted number of late jobs.} Consequently, variable $m$ can take every weight value of the objective function, and therefore can be upper-bounded by the value $W = W(1,n) = \sum_{j \in J} w_j$. 
\red{
\begin{defi}\label{def:statealt}
For each $i,j \in J$ with $i \leq j$, $m = 0, \ldots, W(i,j)$ and $\ell = 0,\ldots,S$, let 
$\widetilde{s}(i,j,m,\ell)$ be the minimum decrease of the starting time of the subsequence 
$\ssq(i,j)$ that ensures a value of {\em at most} $m$ for the weighted number of late jobs 
when the subsequence can be rearranged with moves up to level $\ell$.
\end{defi}
}
The generalization of Equations~\eqref{equ:init_late}~to~\eqref{equ:rec} is as follows: the initialization for all $i,j \in J$, $i \leq j$, and for all $m=0,\ldots,W(i,j)-1$ is given by

\begin{equation*}
\widetilde{s}(i,j,m,0) = \argmin_{t \in\{ L_i(\ssq),\ldots,L_j(\ssq)\}} \left\{ \sum_{k \in \ssq(i,j):\, L_k(\ssq) > t} w_k \leq m \right\}.
\end{equation*}

%
The recursive extension rule for all $i,j \in J$, $i \leq j$, $m=0,\ldots,W(i,j)-1$ and $\ell = 1,\ldots,S$ is
\begin{align} 
\widetilde{s}(i,j,m,\ell) = \min_{k \in \ssq(i,j)} \left\{\max{\!}^{(m+1)} \widetilde{\S}(i,j,k,\ell) \right\}
\end{align}
with the modified multi-set
\begin{equation*}
\widetilde{\S}(i,j,k,\ell) = \bigcup_{u = 0}^{W(i+1,k)-1} \{\widetilde{s}(i+1,k,u,\ell-1)-p_i\} \cup \bigcup_{u = 0}^{W(k+1,j)-1} \{\widetilde{s}(k+1,j,u,\ell)\} \cup \bigcup_{u = 0}^{w_i-1}\{C_k(\ssq)-d_i\}.
\end{equation*}
Obviously, this generalization \red{of the algorithm sketched in Section~\ref{sec:dpU}} results in a worst-case computational complexity of $O(n^3 S\, W)$ which can be bounded by $O(n^4 W)$.

Summing up the above argumentation \red{ and those of Section~\ref{sec:alternative},} we obtain the following theorem. 
\begin{thm}
Problem \wnumberlatejobs{} is pseudo-polynomial solvable within
$O(n^4 B)$ running time where $B = \min\left\{\sum_{j \in J} p_j, \sum_{j \in J} w_j\right\}$.
\end{thm}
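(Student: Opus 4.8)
The plan is to obtain the bound by assembling the two dynamic programs already developed and running whichever one is cheaper on the given instance. First I would record the running time of Algorithm~\ref{alg:C4_plain} from Section~\ref{sec:firstdp}: it correctly returns $f^{(4)}(\sigma^*) = r(1,n,0,S)$ via Recursion~\eqref{eq:rec_weightedtardy} with the initialization~\eqref{eq:init_weightedtardy}, and it runs in $O(n^3 S\,P)$ time, where $P = \sum_{j \in J} p_j$. Since a move at level $\ell$ must have $\ell-1$ further moves nested inside it and therefore involves at least $\ell$ distinct jobs, the only levels that can ever occur satisfy $\ell \le n$; hence the effective stack capacity may be assumed to be at most $n$, and this running time is bounded by $O(n^4 P)$.

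Next I would invoke the alternative dynamic program of Section~\ref{sec:alternative}, which stores the states $\widetilde s(i,j,m,\ell)$ — the minimum decrease of the starting time of $\ssq(i,j)$ that guarantees a weighted number of late jobs of at most $m$ using moves up to level $\ell$ — where now $m$ ranges over $0,\ldots,W(i,j)$ with $W = \sum_{j \in J} w_j$. By the same state-counting and the same $O(n)$ cost per multi-set operation as in the unweighted analysis of Section~\ref{sec:dpU}, this variant is correct and runs in $O(n^3 S\,W) = O(n^4 W)$ time. The combination step is then immediate: in $O(n)$ time one computes and compares $P$ and $W$, then executes the first algorithm if $P \le W$ and the second one otherwise; the total running time is $O(n^4 \min\{P,W\}) = O(n^4 B)$, which is the claimed bound (and, in view of Theorem~\ref{thm:wlate}, essentially best possible).

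I do not anticipate a genuine obstacle here, since both algorithms and their complexities were established in the preceding subsections and this theorem is essentially a corollary. The only points requiring a little care are the uniform bound $S \le n$ on the effective stack capacity — so that the factor $S$ can be absorbed into one factor of $n$ uniformly for both algorithms — and the observation that the $O(n)$ overhead of choosing between the two algorithms is dominated by $O(n^4 B)$ and therefore does not affect the stated bound.
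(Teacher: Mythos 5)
Your proposal is correct and follows exactly the paper's intended argument: Theorem 5 is obtained as a corollary by combining the $O(n^3 S\,P)$ bound of Algorithm~2 with the $O(n^3 S\,W)$ bound of the alternative dynamic program of Section~5.3, absorbing $S\le n$ into one factor of $n$, and running whichever variant is cheaper. Your explicit justification that the effective stack level never exceeds $n$ is a small but welcome addition that the paper leaves implicit.
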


\red{
\section{Empirical analysis of the stack size effect}
\label{sec:emp}

In this section we illustrate the effect of rescheduling and in particular the influence of the stack size
for the three polynomial time algorithms presented in 
Sections~\ref{sec:weightedcompletion} to~\ref{sect:latejobs}.
We are not interested in the practical running times. 
Let us just note that the $O(n^5)$ algorithm for \numberlatejobs{}
takes considerably more time than the two $O(n^4)$ algorithms for  \totwecompletion{} and \maxlateness{}, as can be expected.
Moreover, the dynamic programming algorithms show quite small deviations in running times for instances of the same size. 

As a test bed we follow the data generation scheme by \cite{bib:potts1988}.
We choose processing times and weights independent and identically, uniformly distributed with
$p_j \sim U(1, 100)$ and
$w_j \sim U(1, 100)$.
For two parameters $d^\ell$ and $d^u$ we choose 
$d_j \sim U(P(1,n)\,d^\ell,\, P(1,n)\,d^u)$.
We consider four choices for $d^\ell$, 
namely $d^\ell \in\{0.2, 0.4, 0.6, 0.8\}$,
and all possibilities for $d^u \in \{0.2, 0.4, 0.6, 0.8, 1.0\}$ with $d^\ell \leq d^u$, which yields 14 pairs of parameter values.
For the number of jobs we take $n=50$ and $n=100$.
For each of the 28 resulting classes of instances, 20 instances were randomly generated, i.e.\ 560 instances in total.

The goal of our analysis is two-fold:
We want to analyze the potential improvement of the objective function obtained through rescheduling and we want to illustrate the utilization of the stack.
Naturally, both aspects strongly depend on the stack size.
Therefore, all our evaluations are performed for increasing stack sizes, starting from $S=1$ up to $S=25$.
Note that for practical applications in a production setting, the LIFO stack can be expected to be of moderate size with $S\leq 10$, but for the empirical analysis we consider also larger values of $S$.
It is also clear that $S$ does not depend on the number of jobs.

The results of our tests are given in Figures~\ref{fig:stack_f1} to \ref{fig:stack_f3}.
All values are taken as averages over the 280 instances for $n=50$ and $n=100$, respectively.

\begin{figure}[htbp]
    \begin{center}
       \subfigure[relative gap]{
            \label{fig:stack_f1a}
            \includegraphics[width=0.475\textwidth]{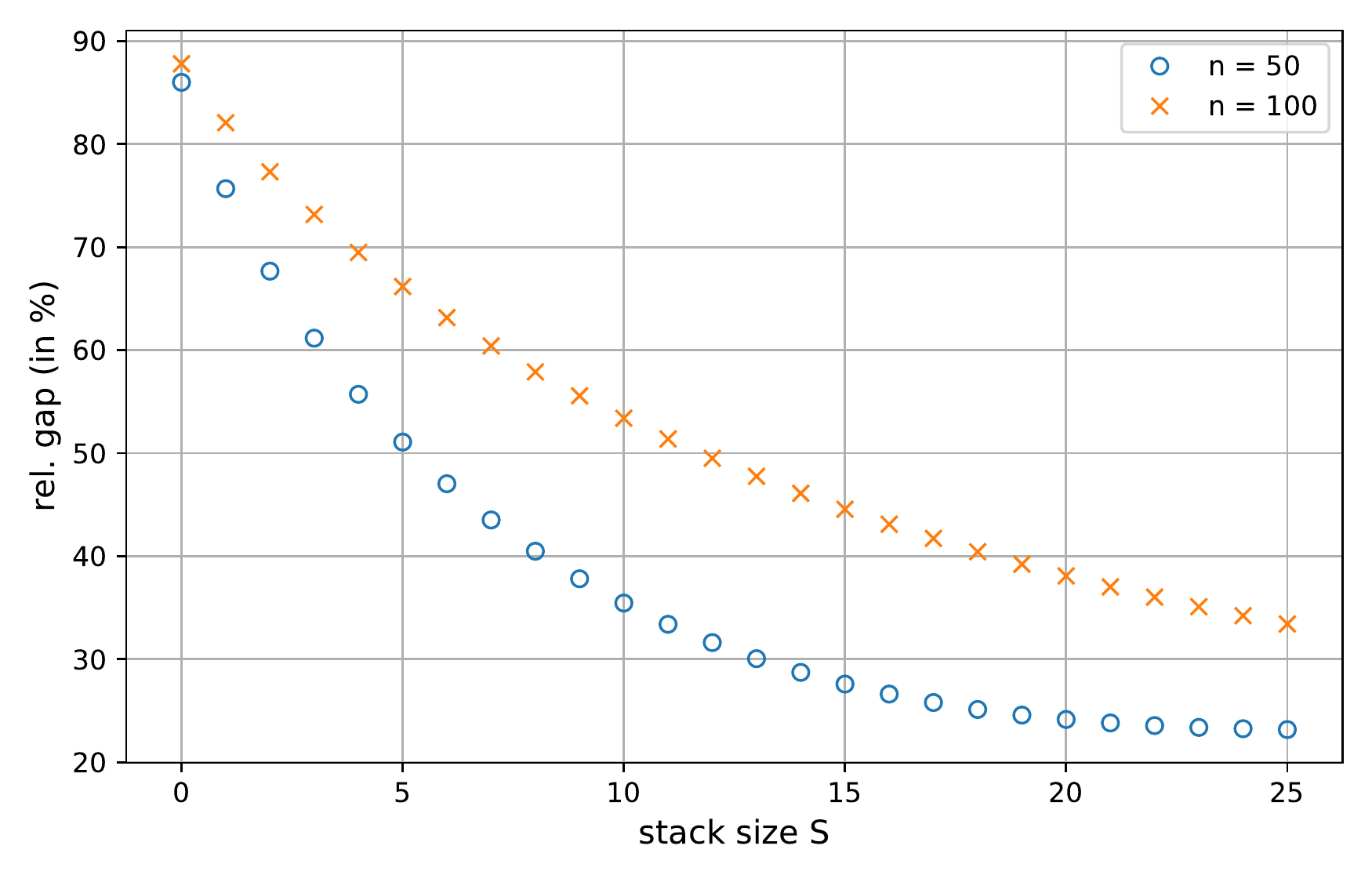}
        } 
        \subfigure[number of moves]{       
            \label{fig:stack_f1b}
            \includegraphics[width=0.475\textwidth]{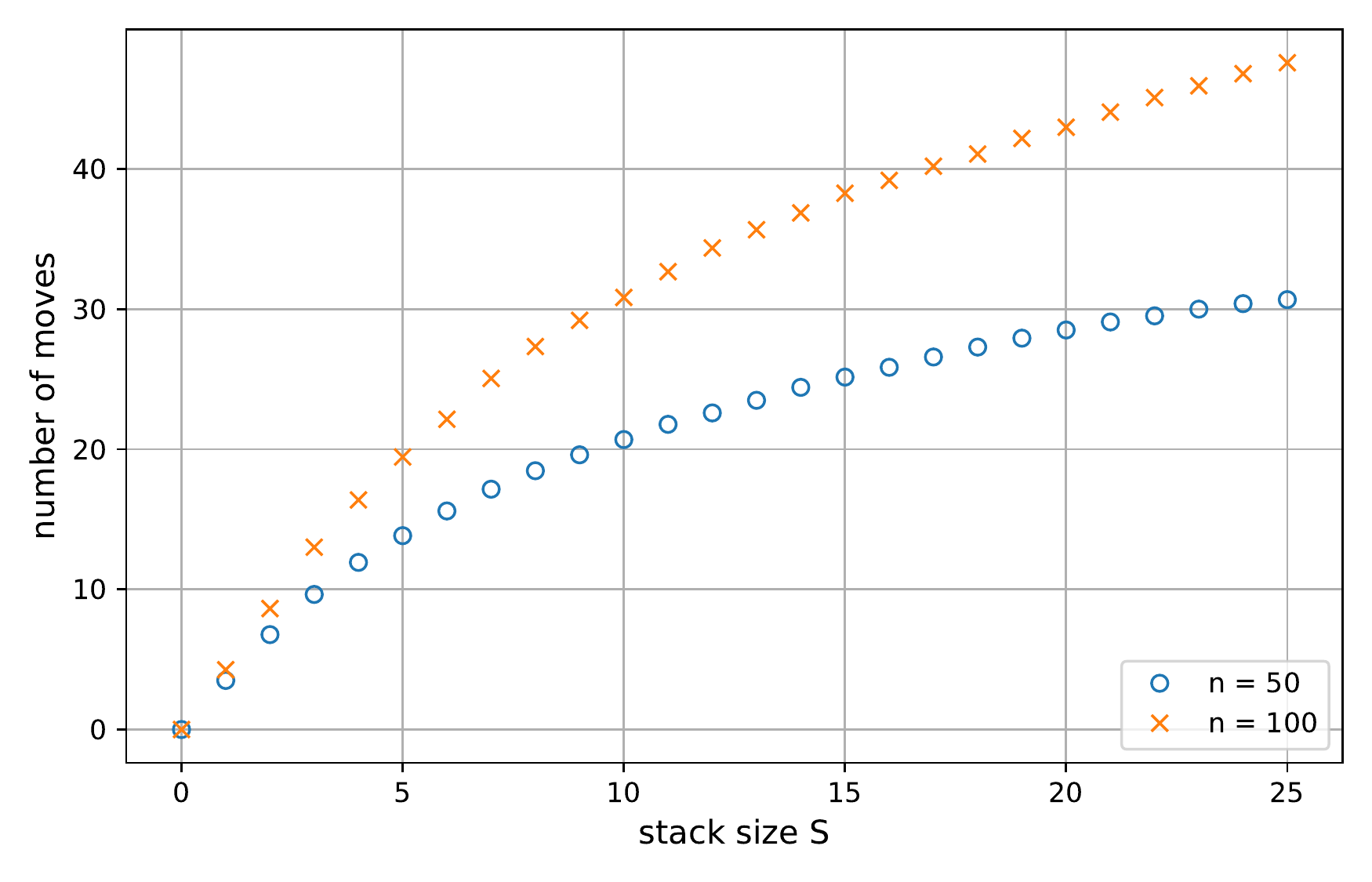}
        }
        \subfigure[maximum  stack utilization]{%
            \label{fig:stack_f1c}
            \includegraphics[width=0.475\textwidth]{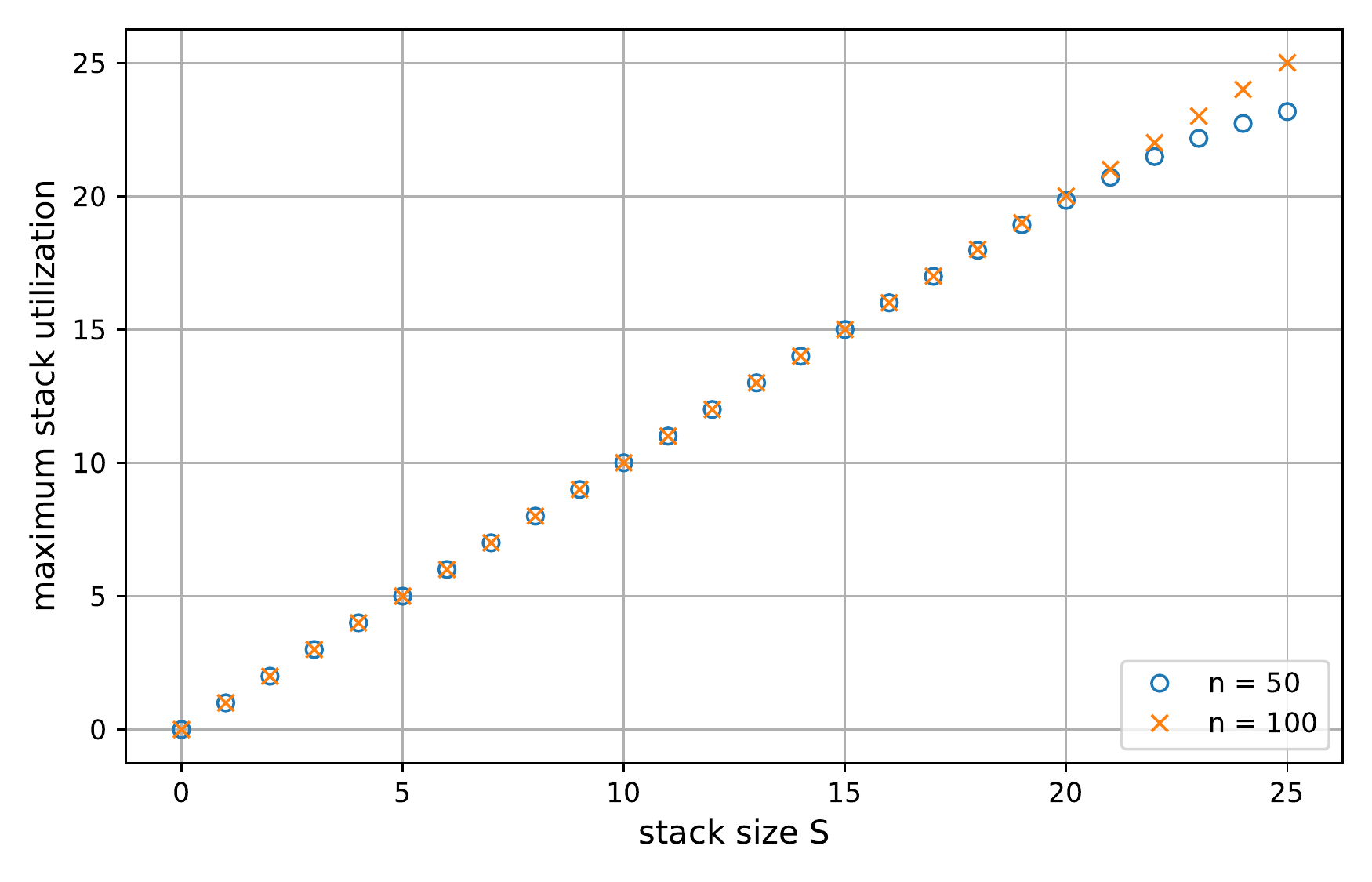}
        } 
        \subfigure[average stack utilization]{       
            \label{fig:stack_f1d}
            \includegraphics[width=0.475\textwidth]{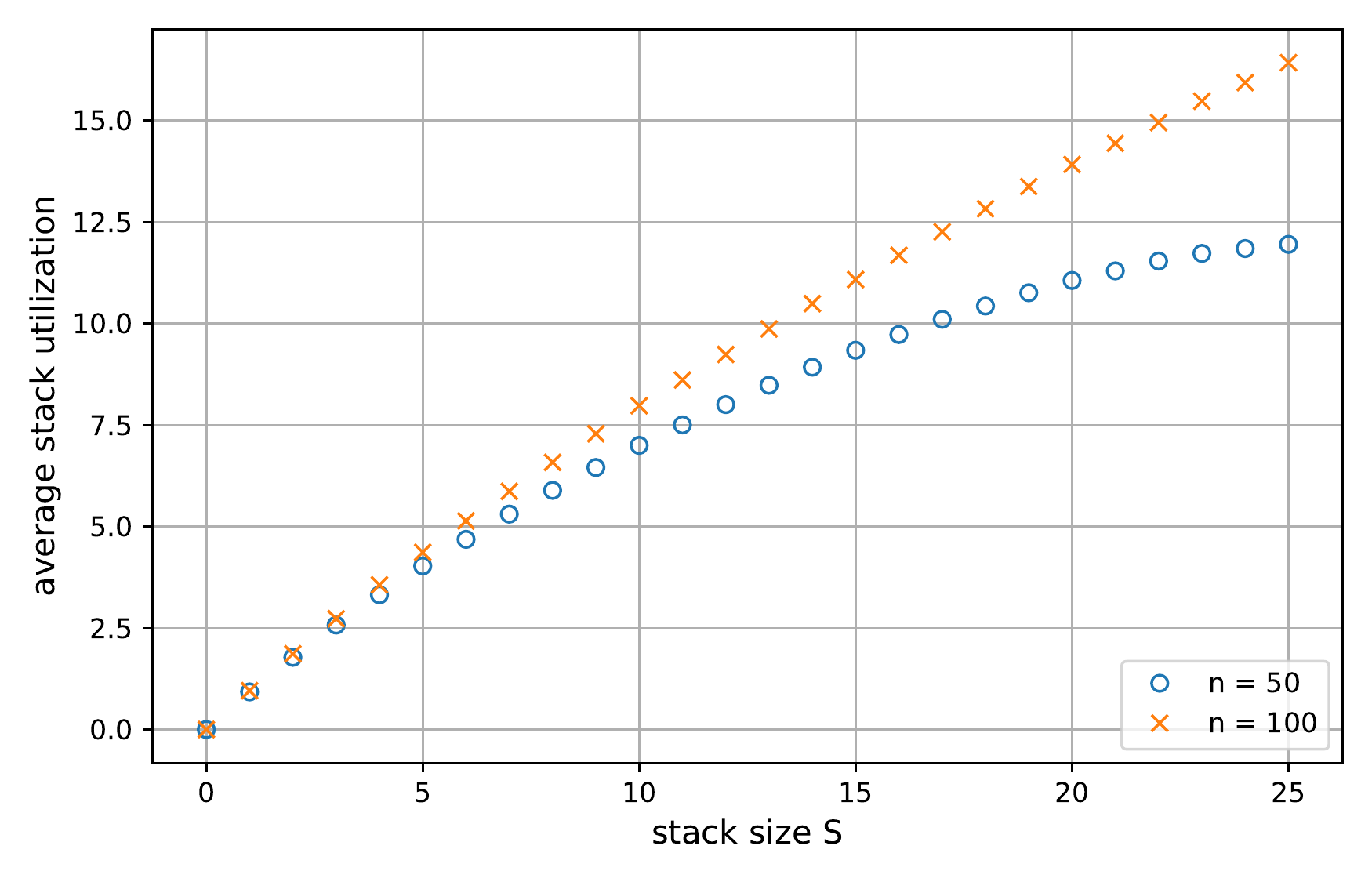}
        }
    \end{center}
    \caption{Effect of increasing stack size $S$ for \totwecompletion{}.}%
    \label{fig:stack_f1}
\end{figure}

\begin{figure}[htbp]
    \begin{center}
       \subfigure[gap relative to $C_{\max}$]{
            \label{fig:stack_f2a}
            \includegraphics[width=0.475\textwidth]{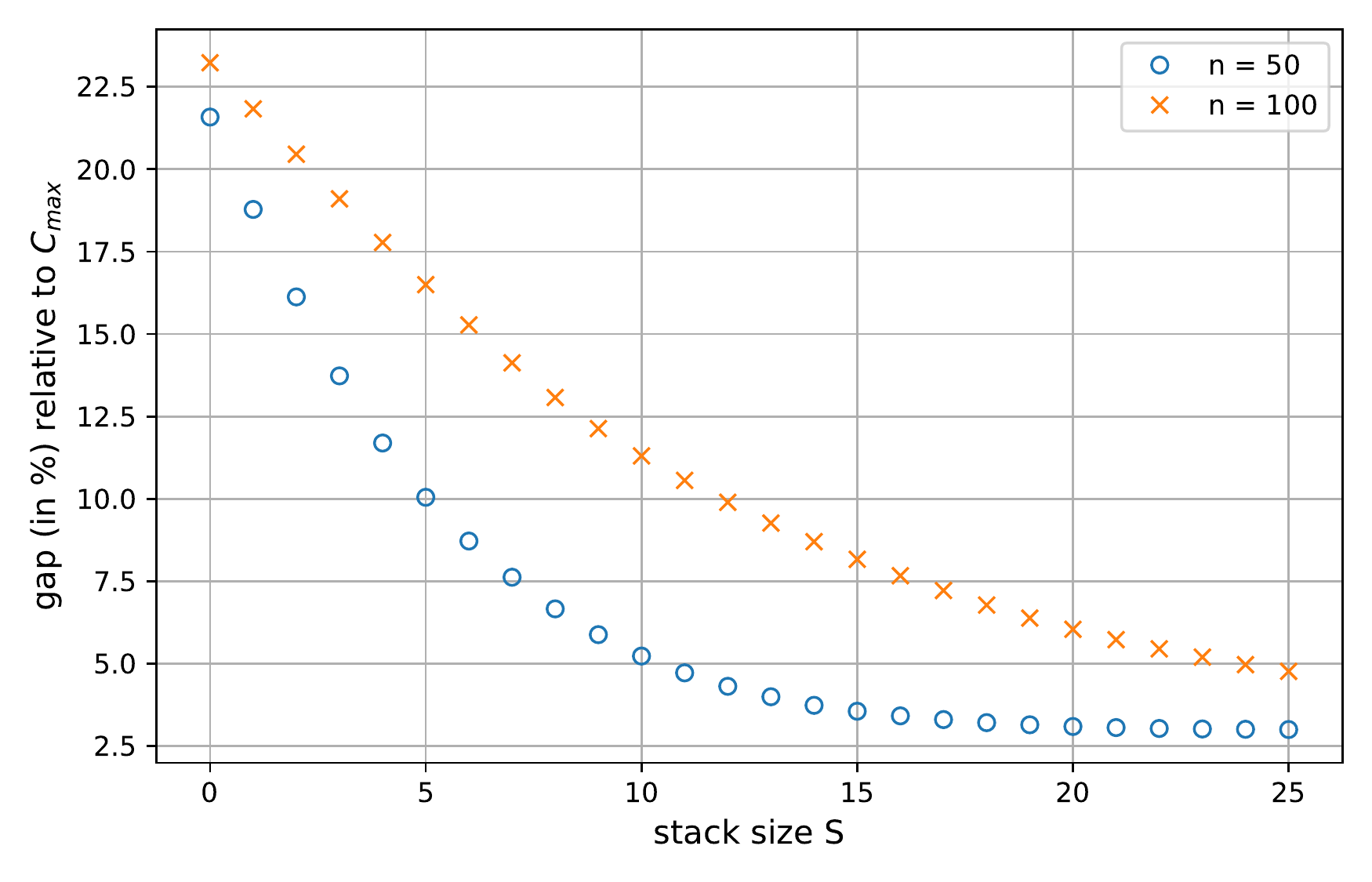}
        } 
        \subfigure[number of moves]{       
            \label{fig:stack_f2b}
            \includegraphics[width=0.475\textwidth]{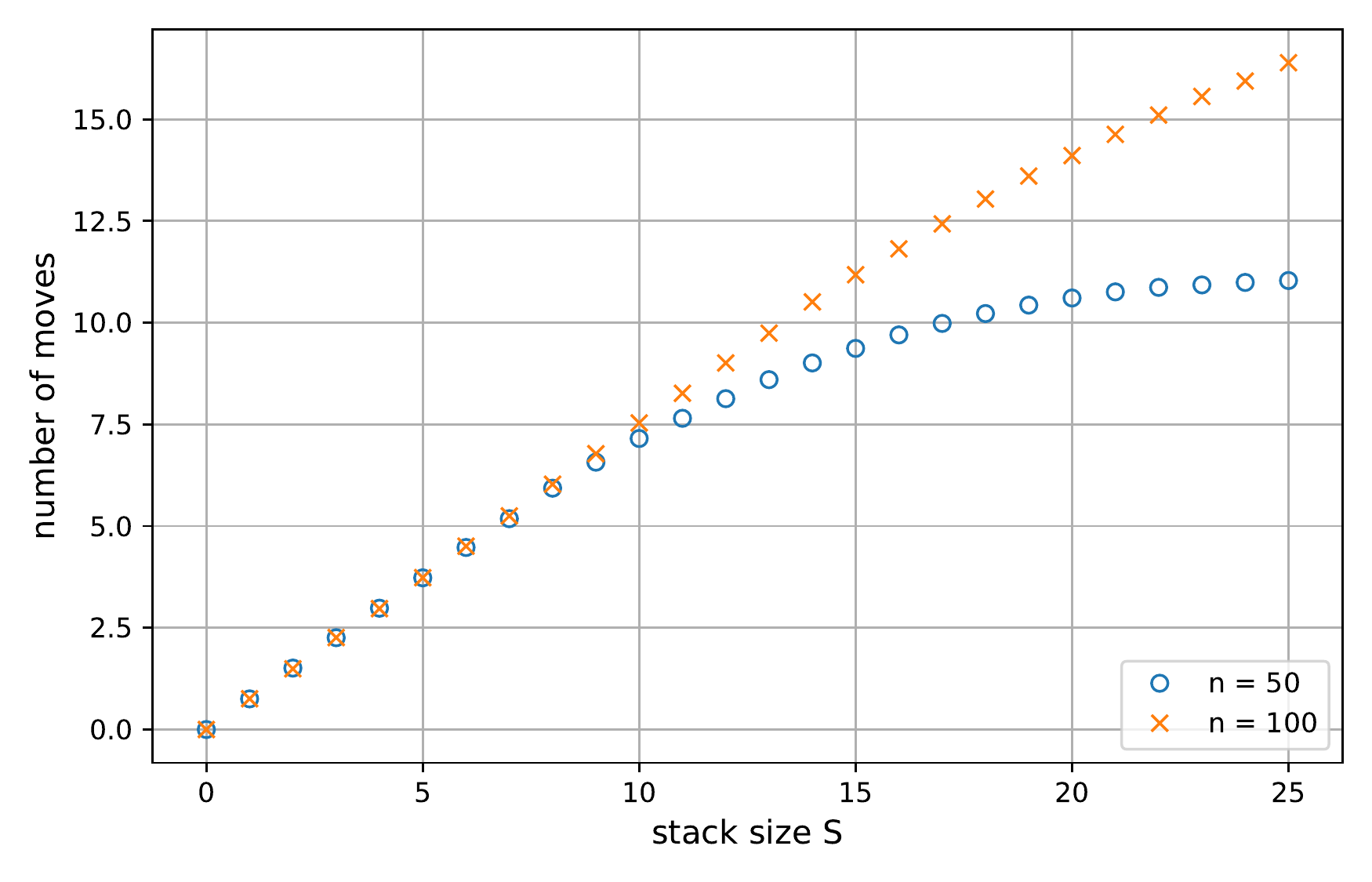}
        }
        \subfigure[maximum  stack utilization]{%
            \label{fig:stack_f2c}
            \includegraphics[width=0.475\textwidth]{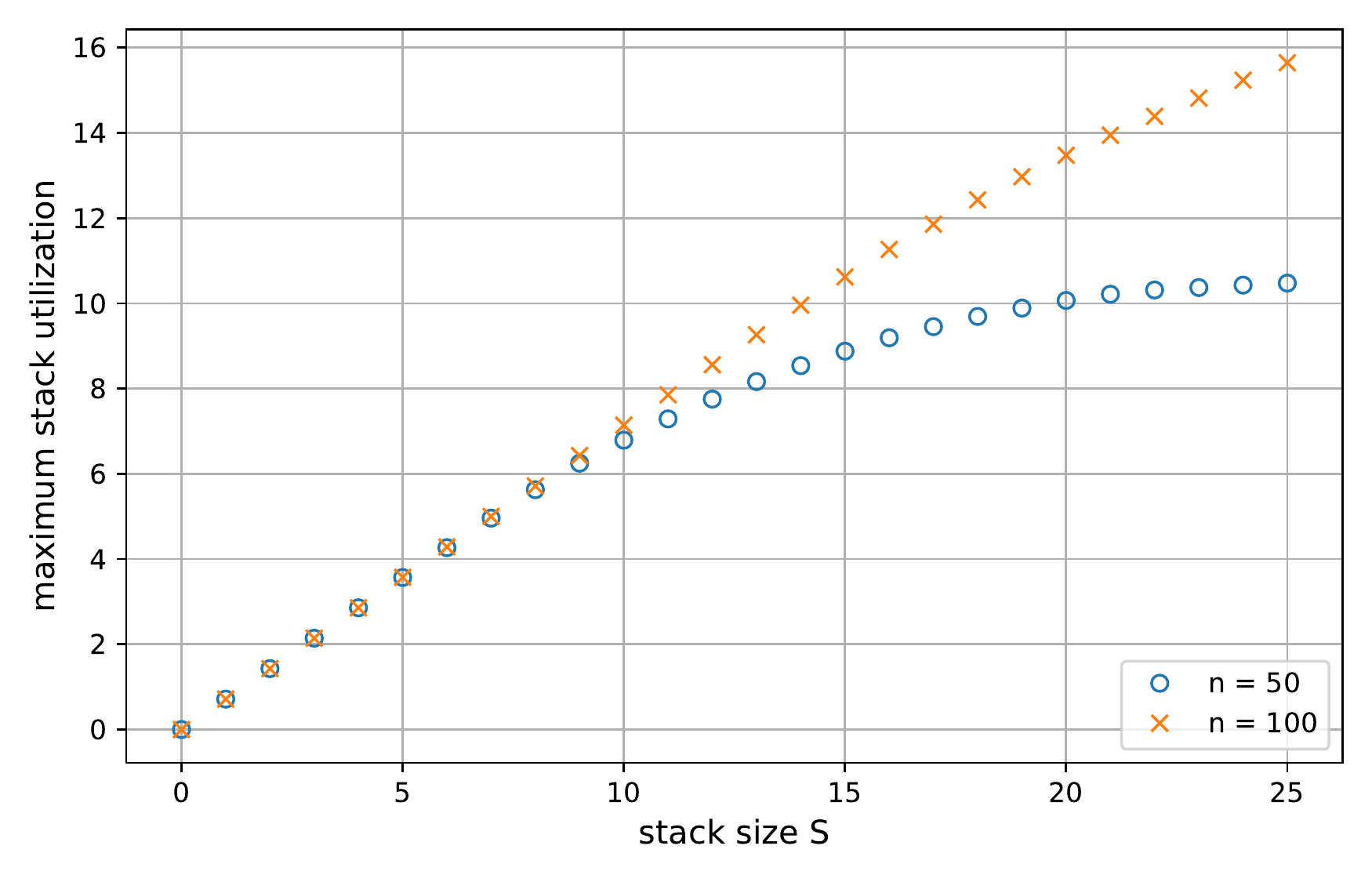}
        } 
        \subfigure[average stack utilization]{       
            \label{fig:stack_f2d}
            \includegraphics[width=0.475\textwidth]{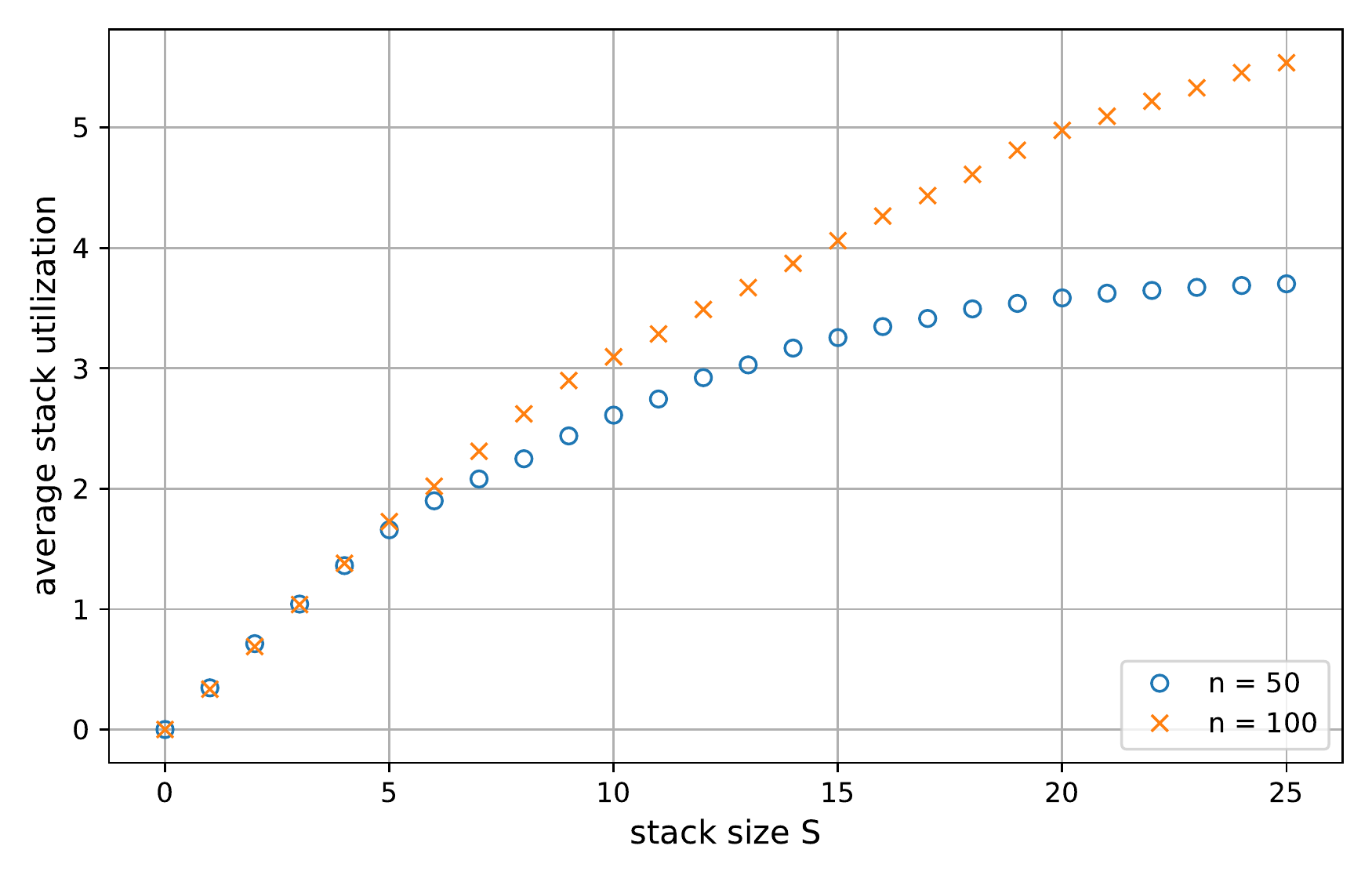}
        }
    \end{center}
    \caption{Effect of increasing stack size $S$ for \maxlateness{}.}%
    \label{fig:stack_f2}
\end{figure}

\begin{figure}[htbp]
    \begin{center}
       \subfigure[absolute gap]{
            \label{fig:stack_f3a}
            \includegraphics[width=0.475\textwidth]{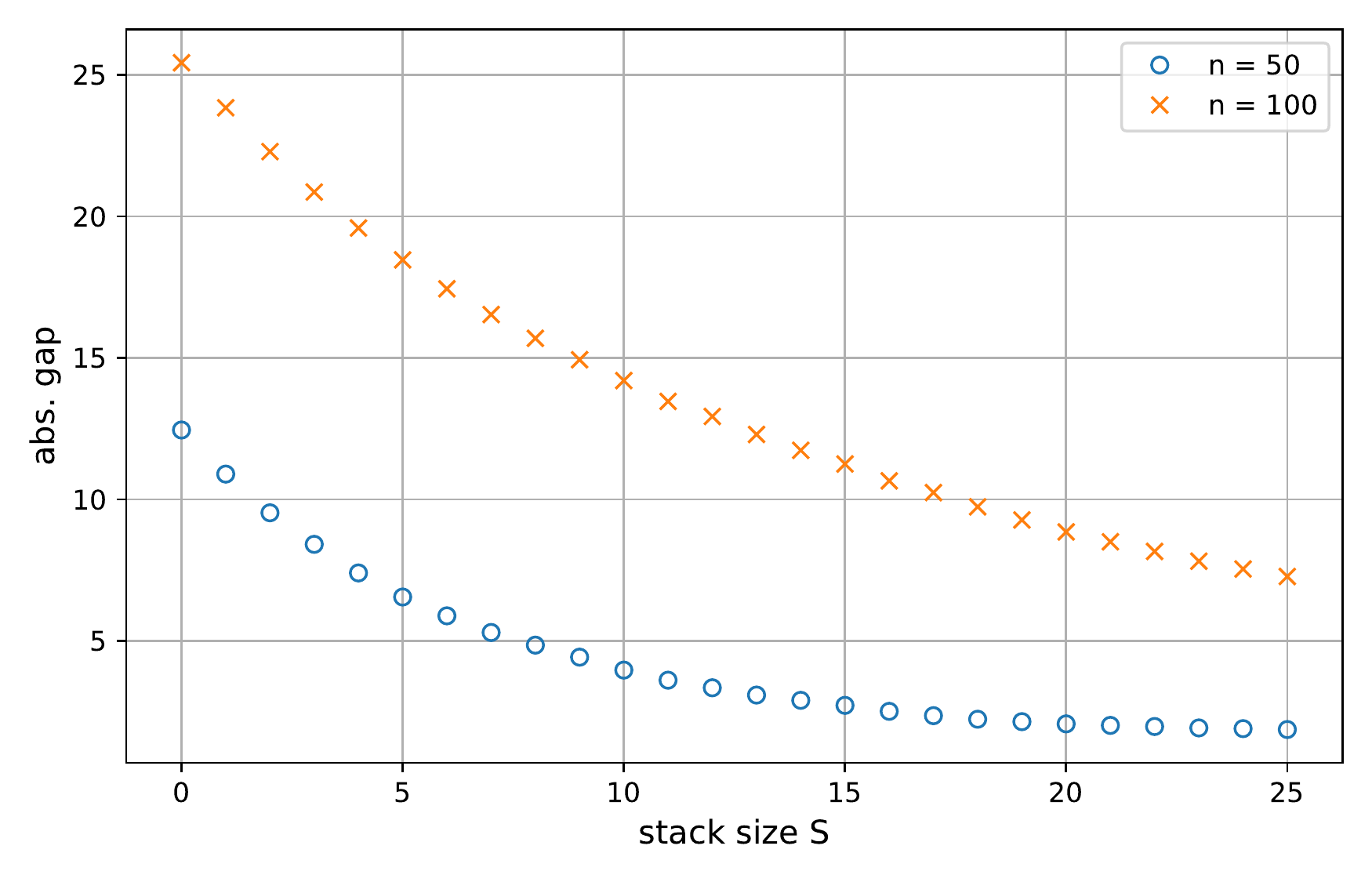}
        } 
        \subfigure[number of moves]{       
            \label{fig:stack_f3b}
            \includegraphics[width=0.475\textwidth]{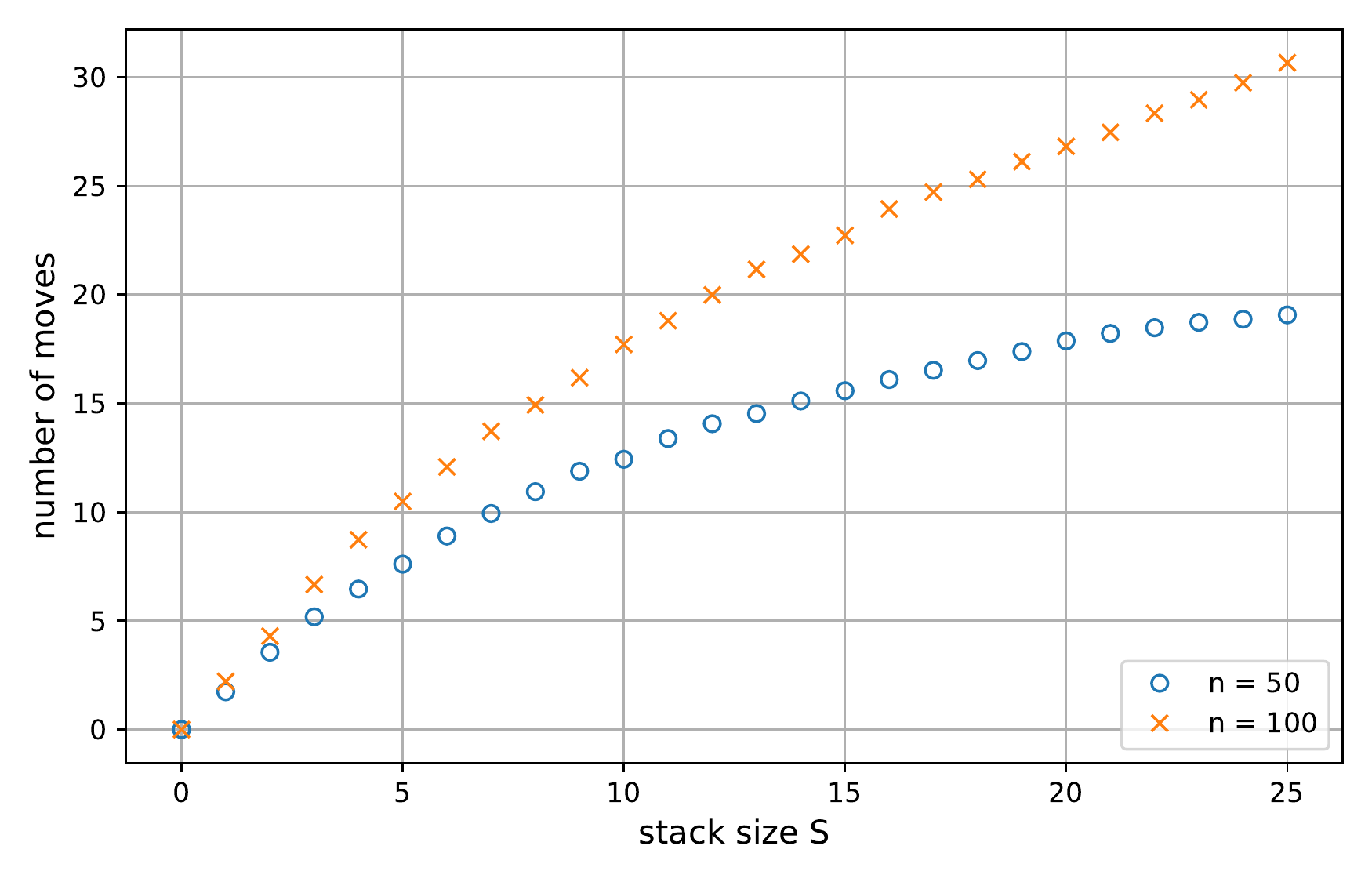}
        }
        \subfigure[maximum  stack utilization]{%
            \label{fig:stack_f3c}
            \includegraphics[width=0.475\textwidth]{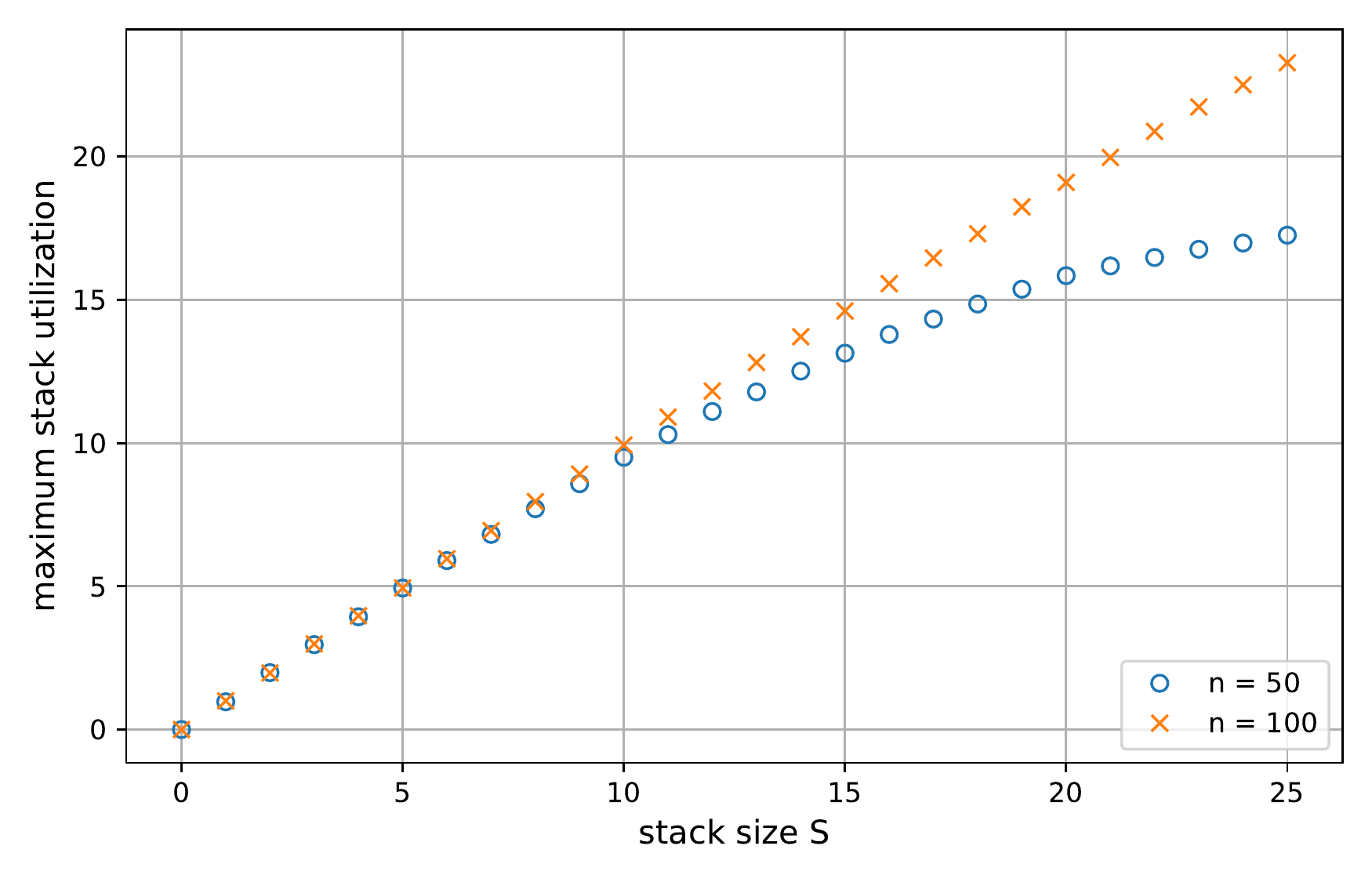}
        } 
        \subfigure[average stack utilization]{       
            \label{fig:stack_f3d}
            \includegraphics[width=0.475\textwidth]{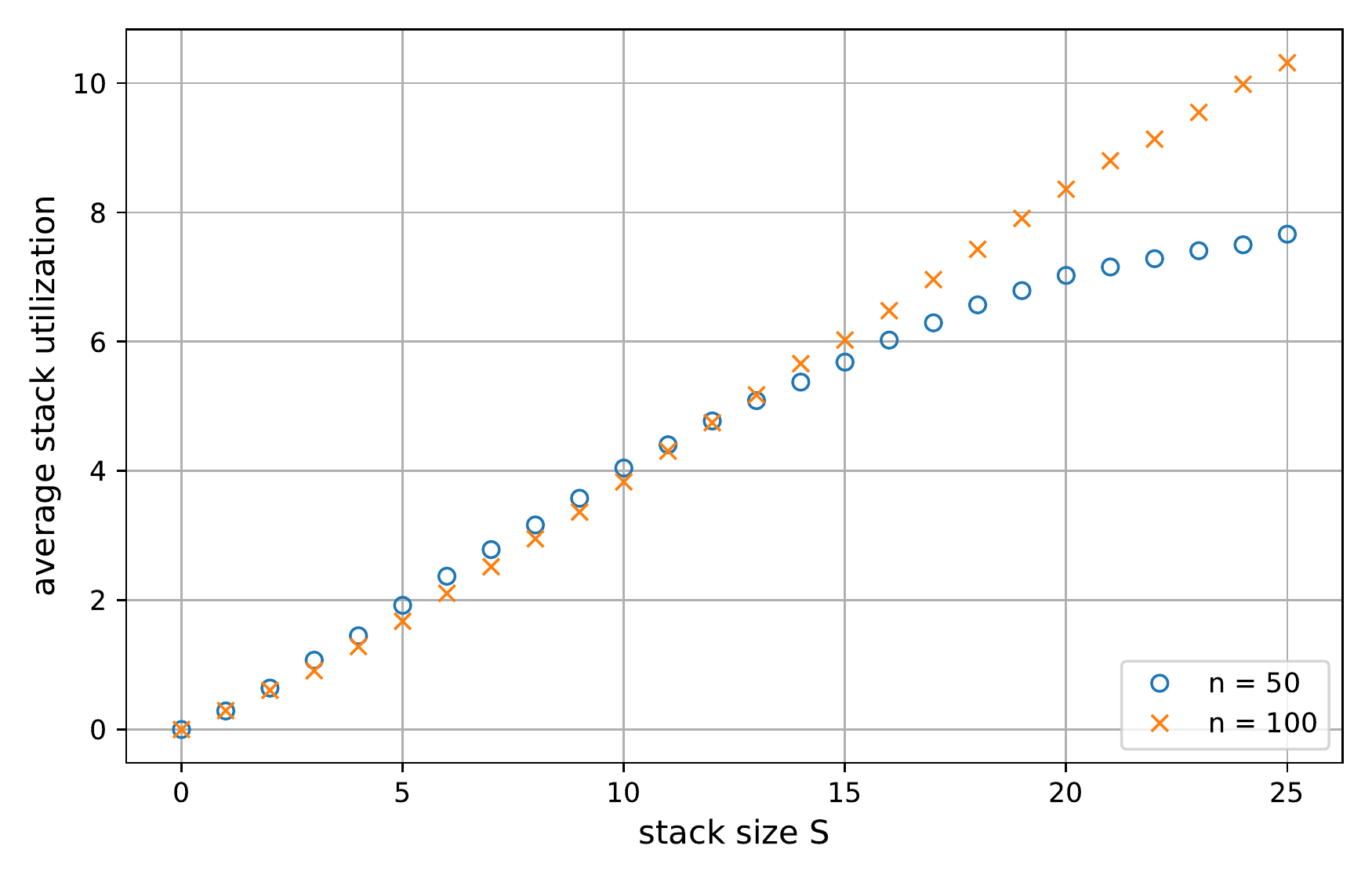}
        }
    \end{center}
    \caption{Effect of increasing stack size $S$ for \numberlatejobs{}.}%
    \label{fig:stack_f3}
\end{figure}

The two graphs in the upper row of each figure describe the effect of rescheduling on the solution.
In Figures~\ref{fig:stack_f1a}, \ref{fig:stack_f2a} and~\ref{fig:stack_f3a} we give the gap between the solution obtained from rescheduling under LIFO constraints with a certain stack size (given on the $x$-axis) and the optimal schedule (obtained by completely reordering the given jobs).
For  \totwecompletion{} the gap in Figure~\ref{fig:stack_f1a} is the difference relative to the total weighted completion time of an optimal schedule expressed in percent.
For \maxlateness{} it is not obvious how to scale the difference to the optimal schedule since the latter could have a positive, zero, or negative maximum lateness.
Thus, we take the difference between lateness after rescheduling and  lateness of an optimal schedule and divide it by $C_{\max} = P(1,n)$, see Figure~\ref{fig:stack_f2a}.
Finally, for \numberlatejobs{} the difference of the number of late jobs is given as an absolute value in Figure~\ref{fig:stack_f3a} (note that the optimal value might be zero).
It can be seen that there is a certain limit on the effect reachable by rescheduling, e.g.\ for total weighted completion time the gap hardly gets below 20\%.
This is possibly due to the LIFO constraint.

In Figures~\ref{fig:stack_f1b}, \ref{fig:stack_f2b} and~\ref{fig:stack_f3b}
we give the number of moves, i.e.\ the total number of jobs inserted into the stack during the rescheduling process.
As can be expected, a larger stack size permits much more improvement of the solution and reschedules a larger proportion of the jobs.
But from a certain stack size, the objective improves only marginally and thus also the number of moves ceases to grow.

The two graphs in the lower row of each figure describe the stack utilization. 
In Figures~\ref{fig:stack_f1c}, \ref{fig:stack_f2c} and~\ref{fig:stack_f3c}
the maximum stack utilization is given, i.e.\ the largest number of jobs contained in the stack at any time during the rescheduling process.
It turns out that for the total weighted completion time the stack size is almost always fully exploited at some point of the execution.
This is not the case when minimizing maximum lateness, especially for the smaller instances with $n=50$, where a close to optimal solution is reached with a moderate stack size, but further improvements are reachable only for a few instances as the stack size increases.
If the number of late jobs is minimized, the stack size can be mostly fully exploited for the larger instances with $n=100$, while for $n=50$ an almost optimal solution is reachable even with $S\approx 15$ (as can be seen from Figure~\ref{fig:stack_f3a})
and thus the stack is not fully utilized 
(see Figure~\ref{fig:stack_f3c}).


Finally, Figures~\ref{fig:stack_f1d}, \ref{fig:stack_f2d} and~\ref{fig:stack_f3d} show the average stack utilization.
This means that we track the number of jobs contained in the stack in each of the $n-1$ steps of the rescheduling process and take their average.
Clearly, this includes a certain phasing-in and phasing-out effect since the empty stack only starts accepting jobs at the beginning and has to be empty again until the end of the sequence.
The concave shape of the corresponding figures is due to this effect which is naturally much stronger for $n=50$ than for $n=100$.
It turns out that for maximum lateness the average stack utilization is considerably smaller than for total weighted completion time
(compare Figures~\ref{fig:stack_f1d} and \ref{fig:stack_f2d})
with number of late jobs as a close follower.
We believe that this is due to the fact that for total weighted completion time the optimal solution is unique (for distinct input values) while for the other two objectives there usually exist various different sequences with the same solution value. 
Once an optimal or very good solution value is reached, 
further rearrangements of the sequence are not beneficial any more.
}

\section{Conclusions}
\label{sec:conc}

In this paper, motivated by questions arising in manufacturing applications, 
we study the problem of rearranging a given sequence of jobs on a single machine 
in order to minimize one of four different objectives. 
\red{We rely on the standard scheduling parameters of a job, namely processing time, weight (importance) and due date.
From a practical point of view, also operating costs arising from the rearrangement steps, in particular energy consumption, could be taken explicitly into account.
}

The new sequence can be obtained respecting certain technological constraints: 
In particular, our jobs are associated to physical parts, sequenced on a conveyor that feeds a processing resource, which can be picked up by a robot. 
A \red{job} taken from the conveyor by the robot is first put into a buffer and then placed again on the conveyor in a later position of the original sequence. 
The buffer is managed as a stack with limited capacity so the last \red{job} entering the buffer is the first taken by the robot to be put down again on the conveyor. 
Due to this LIFO mechanism only a certain set of sequences can be reached starting from the initial one and this set constitutes all the feasible solutions of our problems. 

Our contribution is focused on settling the computational complexity and providing exact solution algorithms.
In particular, we are able to provide strongly polynomial (dynamic programming) solution algorithms for the minimization of 
$(i)$ total weighted completion time of the jobs, 
$(ii)$ maximum of regular functions of the completion times, in particular maximum lateness, of the jobs, and 
$(iii)$ number of late jobs. 
We also prove that $(iv)$ if we want to minimize the weighted number of late jobs, then the problem becomes (weakly) NP-hard. 
For the latter problem we present a pseudo-polynomial solution algorithm (again based on a dynamic program).

Future research should consider an empirical study concerning both the 
characterization of optimal solutions, that may be obtained starting from different input sequences, and the performance of exact solution algorithms in terms 
of numerical efficiency, with possible comparisons with alternative 
enumeration schemes.

\red{
In more general terms, it would clearly be interesting to consider other regimes for buffer management.
From a practical point of view the most relevant setting would be a queue environment implied by a FIFO mechanism.
Of course, also a random access buffer where any job can be taken from the buffer, may be well justified, although technically more complicated.
These two aspects will be subject of future research.
Moreover, it would also be interesting to consider alternative performance figures. 
We dealt with the four most common, standard objectives functions, but in the future also other choices might be relevant, in particular in the context of a real-world application \citep{bib:lvb2020}.

}




\subsection*{Funding}

Gaia Nicosia was partially supported by MIUR PRIN Project AHeAD (Efficient Algorithms for HArnessing Networked Data).
Ulrich Pferschy and Julia Resch were supported by the Field of Excellence ``COLIBRI'' at the University of Graz.
Giovanni Righini acknowledges the support of Regione Lombardia, grant agreement n. E97F17000000009, Project AD-COM.

\bibliographystyle{spbasic1}
\bibliography{references_LIFO_arXiv}

\end{document}